\documentclass[journal]{IEEEtran}
\usepackage{amsmath}
\usepackage{amsthm}
\usepackage{epsfig}
\usepackage{graphicx}
\usepackage{grffile}
\usepackage{times}
\usepackage{url}
\usepackage{color}
\usepackage{cite}
\usepackage{marvosym}
\usepackage{algorithmicx}
\usepackage[ruled]{algorithm}
\usepackage{algpseudocode}
\usepackage{multirow}
\usepackage{stfloats}
\usepackage{amssymb}
\usepackage{slashbox} 
\usepackage{algorithmicx}
\usepackage[ruled]{algorithm}
\usepackage{algpseudocode}
\usepackage{epstopdf}

\newtheorem{theorem}{Theorem}
\newtheorem{lemma}{Lemma}

\begin{document}
\newtheorem{notheorem}{Theorem}

\title{Thwarting Selfish Behavior in 802.11 WLANs}

\author{Albert Banchs, Jorge Ortin, Andres Garcia-Saavedra, Douglas J.~Leith and Pablo Serrano\thanks{A.~Banchs is with the University Carlos III of Madrid and with the Institute IMDEA Networks, Spain. J.~Ortin, A.~Garcia-Saavedra and P.~Serrano are with the University Carlos III of Madrid, Spain. D.~J.~Leith is with Hamilton Institute, Ireland.}} 


\maketitle

\begin{abstract} 
The 802.11e standard enables user configuration of several MAC parameters, making WLANs vulnerable to users that selfishly configure these parameters to gain throughput. In this paper we propose a novel distributed algorithm to thwart such selfish behavior. The key idea of the algorithm is for honest stations to react, upon detecting a selfish station, by using a more aggressive configuration that penalizes this station. We show that the proposed algorithm guarantees global stability while providing good response times. By conducting a game theoretic analysis of the algorithm based on \emph{repeated games}, we also show its effectiveness against selfish stations. Simulation results confirm that the proposed algorithm optimizes throughput performance while discouraging selfish behavior. We also present an experimental prototype of the proposed algorithm demonstrating that it can be implementated on commodity hardware.
\end{abstract}

\graphicspath{{}}

\section{Introduction}\label{sec-intro}

The mechanisms defined in 802.11e, which have been incorporated into the revised version of the 802.11 standard, rely on a number of configurable parameters that can be modified by a simple command. This gives users total control of the contention parameters used by their wireless adapter and allows them to modify the behavior of the wireless interface. In this framework, \emph{selfish} behavior is particularly tempting: users can very easily configure the 802.11e parameters of their station with aggressive values that increase their share of the medium at the expense of the other users. Such selfish behavior can lead to severe unfairness in throughput distribution.

A number of works in the literature have addressed the above selfishness problem. The approaches proposed can be classified in centralized \cite{toledo,domino,serrano,ilenia} and distributed \cite{cagalj05,konorski06,buttyan08}. The advantage of distributed approaches is that they do not rely on a central authority and thus can be used both in infrastructure and ad-hoc modes, in contrast to centralized approaches which can only be used in infrastructure mode. In this paper, we propose a novel \emph{distributed} approach based on game theory to address the selfishness problem.

Game theory is a discipline aimed at modeling situations in which decision-makers or players have to choose specific actions and obtain a gain that depends on the actions taken by all the players in the game. In our problem, the players are the 802.11e stations striving to obtain as much throughput as possible from the WLAN. Previous game theoretic analyses of WLAN \cite{buttyan08} have shown that, if selfish stations are not penalized, the WLAN naturally tends to either great unfairness or network collapse. Following this result, in this paper we focus on the design of a penalizing mechanism in which any player who misbehaves will be punished by other players and thus will have no incentive to misbehave.

A key challenge when designing such a penalizing scheme is to carefully adjust the punishment inflicted to a misbehaving station. Indeed, if the punishment is not severe, a selfish station could benefit from misbehaving. However, an overreaction could trigger the punishment of other stations leading to an endless loop of punishments. Our design makes use of Lyapunov stability theory to address this challenge. In particular, one of the key novelties of the approach proposed in this paper is the combination of Lyapunov stability theory techniques (which guarantee the stability and convergence of the algorithm) and game theory techniques (which guarantee protection against selfish behavior). The main contributions of our paper are as follows:
\begin{itemize}
\item We propose a novel distributed algorithm that penalizes selfish stations by making use of a more aggressive configuration of the 802.11e parameters upon detecting a misbehaving station.
\item We conduct a stability analysis of the algorithm to show that when all stations implement our algorithm, the WLAN converges to the optimal point of operation.
\item We conduct a game theoretic analysis based on repeated games that shows that a station cannot benefit by deviating from the algorithm.
\item We extensively evaluate the performance of the proposed algorithm via simulation under a wide variety of conditions that confirm its good properties.
\item We show the feasibility of implementing the algorithm by deploying a prototype and evaluating it in a small experimental testbed. 
\end{itemize}

The rest of the paper is structured as follows. In Section~\ref{sec-selfishness} we expose the selfishness problem in 802.11 and model it from a game theoretic standpoint. Section~\ref{sec-algorithm} presents the algorithm proposed. The algorithm is evaluated analytically in Section \ref{sec-analysis}: we first analyze its performance when all stations implement the algorithm and then conduct a game theoretic analysis for the case when stations may deviate from the algorithm. The performance of the algorithm is exhaustively evaluated via simulation in Section~\ref{sec-performance} and its feasibility of implementation is validated in Section~\ref{sec-experiment} by means of a prototype. Finally, Section~\ref{sec-conclusions} closes the paper with some concluding remarks.

\section{Selfishness in 802.11}\label{sec-selfishness}

In this section, we briefly summarize the EDCA mechanism of 802.11e and identify the selfishness problem. Then, we present a game theoretic model of this problem.

\subsection{802.11e EDCA}\label{sec-edca}

The 802.11e EDCA mechanism works as follows. When a station has a new frame to transmit, it senses the channel. If the channel remains idle for a period of time equal to the $AIFS$ parameter, the station transmits. Otherwise, if the channel is detected busy, the station monitors the channel until it is measured idle for an $AIFS$ time and then executes a backoff process.

When the backoff process starts, the station computes a random number uniformly distributed in the range $(0,CW-1)$, and initializes its backoff time counter with this value. $CW$ is called the contention window and for the first transmission attempt the minimum value is used (i.e., $CW=CW_{min}$) . In case of a collision $CW$ is doubled, up to a maximum value $CW_{max}$.

As long as the channel is sensed idle, the backoff time counter is decremented once every time slot $T_e$, and ``frozen'' when a transmission is detected on the channel. When the backoff time counter reaches zero, the station accesses the channel in the next time slot. Upon accessing the channel, a station can transmit several consecutive frames for a duration given by the $TXOP$ parameter.

A collision occurs when two or more stations start transmitting simultaneously. An acknowledgment (Ack) frame is used to notify the transmitting station that the frame has been successfully received. If the Ack is not received within a given timeout, the station reschedules the transmission by reentering the backoff process. If the number of failed attempts reaches a predetermined retry limit, the frame is discarded. Once the backoff process is completed, $CW$ is set again to $CW_{min}$. 

As it can be seen from the above description, the behavior of a station depends on a number of parameters, namely $CW_{min}$, $CW_{max}$, $AIFS$ and $TXOP$. As these are (according to the standard) configurable parameters whose setting can be modified by means of simple commands, a user can easily configure these parameters selfishly to gain extra throughput. We refer to this as the \emph{problem of selfishness in 802.11}.

\subsection{Game theoretic model}\label{sec-model}

The above problem of selfishness in 802.11 can be modeled using game theory. Indeed, game theory is a discipline aimed at modeling situations like the above in which players have to choose specific actions that have mutual, possibly conflicting, consequences. In our case, the players are the wireless stations which configure their 802.11e parameters to obtain as much throughput as possible.

The simplest way to model the interaction between players is by means of a static game. In a static game, each player makes a single move and all moves are made simultaneously. In our problem, this means that each station chooses its configuration at the beginning of the game, without any knowledge of the configuration chosen by the other stations, and keeps this configuration for the entire duration of the game.

The modeling of the selfishness problem in 802.11 based on static games \cite{cagalj05,konorski06} leads to the following two families of Nash equilibria: in the first family, there is one player that receives a non-null throughput while the rest of the players receive a null throughput, and in the second family, all players receive a null throughput (the latter is known as the \emph{tragedy of the commons}: the selfish behavior of each player leads to a tremendous misuse of the public good).
 
Both of the above families of solutions are highly undesirable, as they lead either to extreme unfairness or network collapse. One alternative to avoid these undesirable solutions is to allow that users make new moves (i.e., change their 802.11e configuration) during the game. This can be modeled by making use of the theory of \emph{repeated games}~\cite{repeated}. With repeated games, time is divided in stages and a player can take new decisions at each stage based on the observed behavior of the other players in the previous stages. Therefore, if a selfish station is detected to misbehave, the other stations can \emph{punish} this station and thus discourage such behavior.

\subsection{Related work}

Making use of repeated games, \cite{cagalj05} and \cite{konorski06} have proposed two approaches to address the problem of selfishness in 802.11. The approach proposed by \cite{cagalj05} is based on \emph{selective jamming}: if a station detects that another station is misbehaving, thereafter it listens to its transmitted packets and switches to transmission mode, \emph{jamming} enough bits so that the packets cannot be properly recovered at the receiver. While the use of jamming punishes misbehaving stations, it has the main drawback of relying on functionality not available in current wireless devices. Indeed, the implementation of such a jamming mechanism would need to be performed at the hardware level and entails substantial complexity.

The approach proposed by \cite{konorski06} does not suffer from the above drawback but addresses only two types of misbehaving stations: the so-called selfish stations, with $CW = 2$, and the so-called greedy stations, with $CW = 1$. While the scheme proposed is effective when dealing with these two particular configurations, other $CW$ configurations that may greatly benefit selfish stations are neither detected nor punished by this mechanism, as we show in the simulation results of Section~\ref{sec-comparison}. Additionally, the algorithm of \cite{konorski06} is based on heuristics that do not guarantee quick convergence, and indeed we show in a further simulation result in Section~\ref{sec-perturbations} that this approach may suffer from convergence issues.

In this paper, we propose a novel approach based on repeated games that, in contrast to the previous two approaches, relies exclusively on functionality readily available in current wireless devices and is effective against any selfish configuration. Additionally, by relying on Lyapunov stability techniques, our approach is guaranteed to quickly converge to the desired point of operation.

In addition to \cite{cagalj05,konorski06}, a number of additional works have been devoted to address selfishness in wireless networks from a game theoretic point of view \cite{MacKenzieW03,Altman02,kesidis,Inaltekin}. Besides focusing on a different MAC protocol, these studies differ from ours in that they consider some kind of transmission cost or pricing mechanism that plays a key role in the resulting equilibria. In contrast to these approaches, we achieve the desired equilibrium by means of a penalizing mechanism only.

The works in \cite{toledo,serrano,domino,ilenia} address, like ours, the issue of selfish stations in 802.11 WLANs. However, in contrast to our distributed algorithm, these works propose a centralized approach and therefore can only be applied to a WLAN operating in infrastructure mode. Additionally, many of these approaches only address the detection of misbehaving stations while our approach not only detects but also punishes selfish stations.

Substantial work in the literature has also focused on the design of stable adaptive algorithms \cite{tmc2,monet,hollow,boggia,tmc1,doc}. A major difference between our algorithm and these approaches is that they build on local stability analysis while we rely on Lyapunov stability theory, which ensures global asymptotic stability and hence provides stronger guarantees. Indeed, with \cite{tmc2,monet,hollow,boggia,tmc1,doc} convergence is only guaranteed as long as the initial point is sufficiently close to the stable point of operation, while we guarantee convergence for any initial point of operation.

Perhaps the most closely related to this paper is our previous work of \cite{doc}, which uses a similar technique to counteract selfish stations, based also on repeated games. However, both the scope of the work and the algorithm design are substantially different. Indeed, while \cite{doc} focuses on distributed opportunistic scheduling, here we address the problem of selfishness in 802.11. Furthermore, \cite{doc} relies on local linearized analysis, while here we use Lyapunov theory for the global design and analysis of the algorithm. As a consequence, the algorithm proposed in this paper provides much stronger guarantees on stability and convergence than that of \cite{doc}.


\section{GAS Algorithm}\label{sec-algorithm}

In this section, we present our algorithm to address the problem of selfishness in 802.11, which we call \emph{Game-theoretic Adaptive Stable} (GAS) algorithm. In the following, we first present the objectives pursued and then describe the algorithm design to achieve these objectives.

\subsection{Algorithm objectives and scope}\label{sec-objectives}

The central objective of the GAS algorithm is to drive the configuration of the 802.11e EDCA parameters to the target values that maximize the overall WLAN performance. To achieve this objective, GAS enforces that a selfish station cannot benefit from using a different configuration, which provides stations with an incentive to use the target configuration.

Following the arguments given in \cite{comnet,tvt,cagalj05}, in this paper we aim at the following setting of the four EDCA parameters, 
which maximizes the throughput performance of the WLAN (hereafter we refer to this setting as the \emph{target configuration} or \emph{optimal configuration}):
\begin{itemize}
\item The $AIFS$ parameter is set to its minimum value ($AIFS = DIFS$).
\item The $TXOP$ parameter is set such that one packet is transmitted upon accessing the channel ($TXOP = 1$~packet).
\item The maximum backoff stage $m$ is set equal to 0.\footnote{The maximum backoff stage is defined as the number of times that the $CW$ is doubled until reaching $CW_{max}$ (i.e., $CW_{max} = 2^mCW_{min}$).} This yields the same value for $CW_{min}$ and $CW_{max}$ (i.e., $CW_{max} = CW_{min}$); in the following, we refer to this value simply as $CW$.
\item The $CW$ parameter is set equal to the value that, given the above setting for the other parameters, maximizes the throughput of the WLAN when all stations are saturated. Hereafter, we refer to this value as $CW_{opt}$.
\end{itemize}

With the above, the objective of the GAS algorithm can be reformulated as to achieve the following two goals: ($i$) when all stations implement GAS (i.e., they are well-behaved), the system should converge to the target configuration given above; and ($ii$) if a selfish station misbehaves (by using a different configuration from the target one), this station should not obtain any benefit from such misbehavior.

In the following, we address the design of the GAS algorithm. 
Like the previous works of \cite{cagalj05,konorski06}, in the design of the algorithm we assume that all stations are saturated (i.e., always have a packet ready for transmission), they are in the transmission range of each other (i.e., no hidden nodes) and use the same modulation-coding scheme. In the simulations section, we show that the proposed algorithm can be extended to effectively prevent selfish behaviors with non-saturated stations. While the design assumes no hidden terminals, the algorithm also works for hidden terminals as long as the RTS/CTS mechanism is used. Furthermore, in case of different modulation-coding schemes, the algorithm can be applied to enforce the target configuration proposed in~\cite{winet}.

\subsection{Computation of $CW_{opt}$}

We use the model of \cite{tmc1} to compute the throughput $r_i$ of station $i$ as 
\small
\begin{equation}
r_i(\boldsymbol{\hat{\tau}}) = \frac{l}{T_s(\boldsymbol{\hat{\tau}}) }\hat{\tau}_i \prod_{j \neq i} \left(1-\hat{\tau}_j\right)\\
= \frac{\hat{\tau}_i}{1- \hat{\tau}_i}\frac{l}{T_s(\boldsymbol{\hat{\tau}}) }\prod_{j=1}^n \left(1-\hat{\tau}_j\right) 
\label{eq-r}
\end{equation}
\normalsize
where $\boldsymbol{\hat{\tau}}=[\hat{\tau}_1,\cdots,\hat{\tau}_n]$ are the probabilities that a station transmits in a slot time, $n$ is number of active stations in the WLAN, $l$ is the packet length in bits, $T_s(\boldsymbol{\hat{\tau}})  = T_t + \left(T_e - T_t\right)\prod_j\left(1-\hat{\tau}_j\right)$ is the average duration of a slot time in seconds, $T_t$ the duration of a transmission and $T_e$ the duration of an empty time slot.

By \cite[Lemma 1]{convexity}, the rate region boundary is the set of throughput vectors such that $\sum_{i=1}^nT_{air,i}(\boldsymbol{\hat{\tau}}) =1$ where $T_{air,i}(\boldsymbol{\hat{\tau}}) = \hat{\tau}_i\frac{T_t}{T_s(\boldsymbol{\hat{\tau}}) }$ is the fraction of airtime (including both successful and colliding transmissions) used by station $i$.  When all stations use the same transmission probability, it follows immediately that the value $\tau_{opt}$ maximising throughput is the unique solution to 
\begin{align}
\frac{1-n\tau_{opt}}{  \left(1-\tau_{opt}\right)^n }=1-\frac{T_e}{T_t}
\end{align}
Once we have $\tau_{opt}$ then $CW_{opt} = \frac{2}{\tau_{opt}} - 1$.   When $\frac{T_e}{T_t}$ is small, an accurate approximation is  $CW_{opt} = n\sqrt{\frac{2T_t}{T_e}} - 1$.

The following fundamental property will also prove useful:
\begin{theorem}\label{lem:five}
Consider the ball $C_{\Delta}=\{\boldsymbol{\hat{\tau}}: \boldsymbol{\hat{\tau}}\in[\tau_{opt} - \Delta , \tau_{opt} + \Delta]^n\}$ around $\tau_{opt}$, with $0\le\tau_{opt}-\Delta < \tau_{opt}+\Delta \le 1$ and $n \ge 2$.  For any $\boldsymbol{\hat{\tau}}\in C_{\Delta}$ the following inequality holds:
\begin{equation}
D(\boldsymbol{\hat{\tau}}):=nr_{opt} - \sum_j r_j(\boldsymbol{\hat{\tau}}) \le n\rho\Delta
\label{eq-F_bound_3}
\end{equation}
where $\rho=\frac{r_{opt}}{\tau_{opt}(1-\tau_{opt})}$, $r_{opt}$ is the maximum achievable throughput of a station when $\hat{\tau}_i = \hat{\tau}_j \ \forall i,j$ and $\tau_{opt}$ is the value of the transmission probability that leads to this throughput.
\end{theorem}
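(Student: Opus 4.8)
The plan is to read $D$ as a gap in \emph{aggregate} throughput. Writing $S(\boldsymbol{\hat{\tau}}):=\sum_j r_j(\boldsymbol{\hat{\tau}})$ and $\boldsymbol{\tau}_{opt}:=(\tau_{opt},\dots,\tau_{opt})$, we have $n r_{opt}=S(\boldsymbol{\tau}_{opt})$, so
\begin{equation}
D(\boldsymbol{\hat{\tau}})=S(\boldsymbol{\tau}_{opt})-S(\boldsymbol{\hat{\tau}}).
\end{equation}
By \cite[Lemma 1]{convexity} the rate region is convex, and since $S$ is linear in the throughputs its maximum over that region is attained on the boundary in the direction $(1,\dots,1)$; by the permutation symmetry of the problem this maximiser is exactly the symmetric point $\boldsymbol{\tau}_{opt}$. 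Hence $D\ge0$, which fixes the sign; the content of the theorem is the matching upper bound, and that is where the work lies.

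Since $C_{\Delta}$ is a convex box containing both $\boldsymbol{\hat{\tau}}$ and its centre $\boldsymbol{\tau}_{opt}$, and $S$ is smooth on $[0,1]^n$ (its denominator satisfies $T_s\ge T_e>0$), I would apply the mean value theorem to the scalar map $t\mapsto S(\boldsymbol{\hat{\tau}}+t(\boldsymbol{\tau}_{opt}-\boldsymbol{\hat{\tau}}))$ to obtain, for some $\boldsymbol{\xi}\in C_{\Delta}$,
\begin{equation}
D(\boldsymbol{\hat{\tau}})=\sum_{i=1}^n \frac{\partial S}{\partial \hat{\tau}_i}(\boldsymbol{\xi})\,(\tau_{opt}-\hat{\tau}_i).
\end{equation}
Because $|\tau_{opt}-\hat{\tau}_i|\le\Delta$ for every $i$, this gives $D(\boldsymbol{\hat{\tau}})\le \Delta\sum_{i}|\partial S/\partial\hat{\tau}_i(\boldsymbol{\xi})|$, so the whole statement reduces to the uniform gradient bound
\begin{equation}
\left|\frac{\partial S}{\partial \hat{\tau}_i}(\boldsymbol{\hat{\tau}})\right|\le \rho \qquad \text{for all }\boldsymbol{\hat{\tau}}\in C_{\Delta}\text{ and all }i,
\label{eq-gradbound}
\end{equation}
which would yield $\sum_i|\partial S/\partial\hat{\tau}_i|\le n\rho$ and hence \eqref{eq-F_bound_3}. (A naive per-station split $D=\sum_j(r_{opt}-r_j)$ with a separate bound on each term fails, because $r_j$ is coupled to \emph{all} the $\hat{\tau}_k$; routing the argument through the aggregate $S$ and its gradient is what makes this coupling harmless.)

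To make \eqref{eq-gradbound} tractable I would pass to the odds $x_k:=\hat{\tau}_k/(1-\hat{\tau}_k)$ and set $\alpha:=1-T_e/T_t\in(0,1)$ and $P:=\prod_k(1+x_k)$. A short computation collapses the aggregate throughput to $S=\frac{l}{T_t}\cdot\frac{\sum_k x_k}{P-\alpha}$, and, using $dx_i/d\hat{\tau}_i=(1+x_i)^2$ together with $P=(1+x_i)\prod_{k\neq i}(1+x_k)$, its derivative factors neatly as
\begin{equation}
\frac{\partial S}{\partial \hat{\tau}_i}=\frac{l}{T_t}\cdot\frac{(1+x_i)^2\Bigl[\prod_{k\neq i}(1+x_k)\bigl(1-\sum_{k\neq i}x_k\bigr)-\alpha\Bigr]}{(P-\alpha)^2}.
\end{equation}
In the same variables the equation defining $\tau_{opt}$ becomes $(1+x_{opt})^{n-1}\bigl(1-(n-1)x_{opt}\bigr)=\alpha$, and a direct evaluation gives $\rho=\frac{l}{T_t}\cdot\frac{(1+x_{opt})^2}{(1+x_{opt})^n-\alpha}$. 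Bound \eqref{eq-gradbound} is thus the explicit algebraic inequality obtained by comparing these two closed forms.

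The hard part will be certifying that inequality uniformly over the box. A useful structural feature is that the bracket in the numerator, and hence the \emph{sign} of $\partial S/\partial\hat{\tau}_i$, depends only on the \emph{other} stations' variables $\{x_k\}_{k\neq i}$ and not on $x_i$; combined with the defining equation for $x_{opt}$ this pins the sign change of each partial derivative to the symmetric optimum and shows that the extreme magnitudes of $|\partial S/\partial\hat{\tau}_i|$ are taken on the boundary of $C_{\Delta}$ rather than in its interior (where the centre already forces the derivative to vanish). It then remains to dominate those boundary values: using $n\ge2$, the confinement $0\le\tau_{opt}-\Delta<\tau_{opt}+\Delta\le1$ that keeps every $x_k$ in a bounded nonnegative interval, and the elementary estimate $\sum_{k\neq i}x_k\le\prod_{k\neq i}(1+x_k)$ to control the numerator, one reduces \eqref{eq-gradbound} to a one-variable comparison that the optimality equation closes. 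This final domination step is the crux; everything preceding it is bookkeeping.
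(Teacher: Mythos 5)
Your reduction of the theorem to the uniform gradient bound $\bigl|\partial S/\partial\hat{\tau}_i\bigr|\le\rho$ on all of $C_{\Delta}$ is the step that fails: that inequality is false on part of the box. Evaluate your own closed form near the corner $\boldsymbol{\hat{\tau}}=(\tau_{opt}-\Delta,\dots,\tau_{opt}-\Delta)$ with $\Delta$ close to $\tau_{opt}$ (which the hypothesis $0\le\tau_{opt}-\Delta$ permits): as all $x_k\to0$ the bracket tends to $1-\alpha=T_e/T_t$ and the denominator to $(1-\alpha)^2$, so $\partial S/\partial\hat{\tau}_i\to l/T_e$. Meanwhile $\rho=l(1-\tau_{opt})^{n-2}/T_{opt}$ with $T_{opt}=n\tau_{opt}T_t\approx\sqrt{2T_eT_t}$ by the first-order optimality condition, so $\partial S/\partial\hat{\tau}_i$ exceeds $\rho$ by roughly $\sqrt{T_t/T_e}$ (for $T_e/T_t=0.01$ and $n=2$ one gets $\tau_{opt}\approx0.09$ and $l/T_e\approx18\rho$). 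Physically this is the regime where the channel is mostly idle and a small increase in attempt probability buys a large throughput increase, so the gradient of $S$ is necessarily much steeper there than at the optimum. Since the mean value theorem gives you no control over where $\boldsymbol{\xi}$ lands on the segment, you cannot restrict attention to the region where your bound holds; the ``final domination step'' you flag as the crux is an attempt to certify a false inequality. The theorem survives at that corner only because the \emph{secant} slope from $\boldsymbol{0}$ to $\boldsymbol{\tau}_{opt}$ is $nr_{opt}/\tau_{opt}\le n\rho$, even though the tangent slope near $\boldsymbol{0}$ is not.

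The paper's proof is organized precisely to dodge this. It first proves two symmetrization lemmas (Lemmas \ref{lem:three} and \ref{lem:four}): over the box, $\sum_j r_j$ is minimized at one of the two symmetric corners, all-$(\tau_{opt}-\Delta)$ or all-$(\tau_{opt}+\Delta)$, so only the diagonal matters. The low corner is then handled by a direct secant estimate, $\frac{1}{n}\sum_j r_j\ge\tau_m(1-\tau_{opt})^{n-1}l/T_{opt}$, which never touches the large derivative near $\tau=0$; a derivative bound is used only for the high corner and only on $[\tau_{opt},\tau_M]$, where the derivative genuinely stays below $\rho$. To repair your argument you would need both ingredients: a reduction to the symmetric corners (your appeal to permutation symmetry of the maximizer of $S$ is not enough --- symmetry yields a permutation-invariant \emph{set} of extremizers, not a symmetric extremizer; the paper's Lemma \ref{lem:three} does the real work via a convexity/KKT argument) and an asymmetric treatment of the two sides of $\tau_{opt}$, replacing the tangent bound by a secant bound below the optimum.
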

\emph{Proof:} See Appendix. 

This theorem bounds the difference $D(\boldsymbol{\hat{\tau}})$ between the optimum and actual WLAN sum-rate throughput.

\subsection{Algorithm description}

Following the theory of repeated games \cite{repeated}, GAS implements an \emph{adaptive algorithm} in which each station updates its $CW$ at every stage, while keeping the configuration of the other parameters fixed to the values provided in Section \ref{sec-objectives}.\footnote{Following the  802.11e standard, which updates the configuration of the 802.11e parameters at every beacon frame, we set the duration of a stage equal to the duration of a beacon interval. While the beacon interval can be set to different values, it is typically set to 100 ms.} The central idea behind GAS is that, when a station is detected as misbehaving, the other stations reduce their $CW$ in subsequent stages to prevent this selfish station from benefiting from misbehaving.


A key challenge in GAS is to carefully adjust the reaction against a selfish station. Indeed, as mentioned in the introduction, if the reaction is not severe enough a selfish station may benefit from its misbehavior, but if the reaction is too severe the system may become unstable by entering an endless loop where all stations indefinitely reduce their $CW$ to punish each other. In order to address this challenge, we design GAS using techniques from Lyapunov theory \cite{Lyapunov} that prevent the system from entering into a spiral of increasing punishments that lead to throughput collapse and guarantee that the $CW$ of all stations converges to $CW_{opt}$.

The iterative GAS update of the $CW$ values can be modeled as a discrete time dynamical system whose state is given by $\boldsymbol\tau = \left[\tau_1, \tau_2, \ldots, \tau_n\right]$, where $\tau_i$ is related to the probability with which station $i$ transmits in a slot time. That is:
\begin{equation}\label{eq-adaptive}
\boldsymbol\tau \left(t+1\right) = f \left( \boldsymbol\tau \left(t\right) \right)
\end{equation}
where $f : \mathbb{R}^n \rightarrow \mathbb{R}^n$ is a non-linear function that models the system dynamics.  The main design challenge is to determine the function $f$.  To this end, we adopt a standard feedback approach \cite{book} and update $\tau_i$ at each stage as:
\begin{equation}
\tau_i\left(t+1\right) = \tau_i\left(t\right) + \gamma g_i\left(\boldsymbol{\tau}(t)\right),\ i=1,\cdots,n
\label{eq-tau_updt}
\end{equation}
where $\gamma>0$ is a scalar parameter and $g_i: [0,1]^n \rightarrow [0,1]$ .

In order to allow for larger values of $\gamma$, which reduces the convergence time of the algorithm,\footnote{The fact that imposing a lower bound on $\tau_i$ allows for larger $\gamma$ values can be seen from the proof of Theorem \ref{th:one}.} we impose that the probability of transmitting in a slot time does not fall below $\tau_{opt}/2$. Similarly, if $\tau_i(t)$ exceeds 1, we transmit with probability 1. Thus,
\begin{equation}\label{eq-tauhat}
\hat{\tau}_i(t) = \min(1,\max(\tau_i(t),\tau_{opt}/2))
\end{equation}
where $\hat{\tau}_i(t)$ is the probability that the station $i$ transmits in a slot time after imposing the above constraints. Given  $\hat{\tau}_i(t)$, the $CW$ parameter of station $i$ at stage $t$ is $CW_i(t) = \frac{2}{\hat{\tau}_i(t)} - 1$.

We next address the design of function $g_i$ in (\ref{eq-tau_updt}). 
Our requirements when designing $g_i$ are twofold: ($i$) selfish stations should not be able to obtain extra throughput from the WLAN by following a different strategy from GAS, and ($ii$) as long as there are no selfish stations that deviate from GAS, the $\tau_i$ of all stations should converge to the optimal value $\tau_{opt}$. To meet the above requirements
we select $g_i$ as follows
\begin{equation}\label{eq-ei}
g_i(\boldsymbol{\tau}) = \sum_{j \neq i}{\left(r_j(\boldsymbol{\tau}) - r_i(\boldsymbol{\tau})\right)} - F_i(\boldsymbol{\tau})
\end{equation}
where $F_i(\boldsymbol{\tau})$ is a function that we design below.  Observe that $g_i(\boldsymbol{\tau})$ consists of the following two components, each of which fulfills one of the requirements identified above: 
\begin{itemize}
\item The first component, $\sum_{j\neq i}{r_j(\boldsymbol{\tau}) - r_i(\boldsymbol{\tau})}$, serves to punish selfish stations: if a station $i$ receives less throughput than the other stations, this component will be positive and hence station $i$ will increase its transmission probability $\tau_i$ to punish the other stations.
\item The second component, $F_i(\boldsymbol{\tau})$, drives the system to the target configuration in the absence of selfish behavior.
\end{itemize}

Regarding $F_i(\boldsymbol{\tau})$, to drive $\tau_i$ to the target value $\tau_{opt}$ we require $F_i(\boldsymbol{\tau})$ to be positive when $\tau_i > \tau_{opt}$, and negative otherwise.  Furthermore, $F_i(\boldsymbol{\tau})$ should not allow selfish stations to obtain a throughput gain over well-behaved stations.   To gain insight, we first consider steady-state operation, which implies that selfish stations play with a static configuration, and consider the case when all stations but a selfish one implement GAS. (In the analysis of Section \ref{sec-game} we show that GAS is also effective against selfish strategies that change the configuration over time.)   In steady-state the LHS and RHS of update (\ref{eq-ei}) must be equal for those stations using GAS, i.e. $g_i(\boldsymbol{\tau}^\infty)=0$ $\forall i\ne s$ and so 
\begin{align}
F_i(\boldsymbol{\tau}^\infty) =  \sum_{j \neq i}{\left(r_j(\boldsymbol{\tau}^\infty) - r_i(\boldsymbol{\tau}^\infty)\right)} =  r_s(\boldsymbol{\tau}^\infty)-r(\boldsymbol{\tau}^\infty)
\end{align}
where $s$ is the selfish station, $r(\boldsymbol{\tau}^\infty)$ is the throughput of a well-behaved station (which, by symmetry, is the same for all such stations in steady-state) and the $\infty$ superscript indicates values when the system is in steady state.   We require that the throughput of a selfish station does not exceed the target throughput, $r_s(\boldsymbol{\tau}^\infty) \leq r_{opt}$.   That is,

\small
\begin{align}
nr_s(\boldsymbol{\tau}^\infty) &= r_s(\boldsymbol{\tau}^\infty) + (n-1)r(\boldsymbol{\tau}^\infty) + (n-1)(r_s(\boldsymbol{\tau}^\infty) -r(\boldsymbol{\tau}^\infty)) \nonumber\\
&= \sum_{j=1}^n r_j(\boldsymbol{\tau}^\infty) + (n-1)F_i(\boldsymbol{\tau}^\infty)   \le n r_{opt}
\end{align}
\normalsize
which is satisfied when
\begin{align}
F_i(\boldsymbol{\tau}) &\leq \frac{1}{n-1}D(\boldsymbol{\tau}) \label{eq-climit}
\end{align}
where
\begin{equation}\label{eq-e}
D(\boldsymbol{\tau}) = nr_{opt} - \sum_j{r_j}(\boldsymbol{\tau})
\end{equation}
The intuition here is that when a selfish station misbehaves, it receives more throughput than the well-behaved stations. This, however, moves the point of operation away from the optimal one, reducing the overall efficiency in terms of the aggregate throughput. The bound (\ref{eq-climit}) ensures that the additional throughput received by the selfish station does not outweigh the throughput it loses due to the overall loss of aggregate throughput. This guarantees that in steady-state the selfish station does not receive more throughput and hence does not benefit from misbehaving.

Following the above requirements, we select $F_i(\boldsymbol{\tau})$ as:
\begin{equation}\label{eq-F}
F_i(\boldsymbol{\tau}) = \left\{
\begin{array}{l c}
D(\boldsymbol{\tau})/n, & \tau_i > \tau_{opt} \ \& \ D(\boldsymbol{\tau}) \geq 0\\
-D(\boldsymbol{\tau})/n, & \tau_i \leq \tau_{opt} \ \& \ D(\boldsymbol{\tau}) \geq 0 \\
D(\boldsymbol{\tau})/(n-1), & D(\boldsymbol{\tau}) < 0 \\
\end{array}
\right.
\end{equation}
This choice meets the design requirements set above for the function $F_i(\boldsymbol{\tau})$: ($i$) it satisfies (\ref{eq-climit}), preventing selfish stations from obtaining any gain; and ($ii$) for well-behaved stations it fulfills $F_i(\boldsymbol{\tau}) > 0$ for $\tau_i > \tau_{opt}$ and $F_i(\boldsymbol{\tau}) < 0$ for $\tau_i < \tau_{opt}$, which drives the system to optimal operation. The only exception is when $D(\boldsymbol{\tau})<0$: in this case (\ref{eq-climit}) imposes $F_i(\boldsymbol{\tau}) < 0$ independent of the value of $\tau_i$. However, as we show later in Section \ref{sec-stability}, this does not affect the convergence of the algorithm to the desired point of operation.

To compute update (\ref{eq-tau_updt}) with these choices of $g_i$ and $F_i$, each station only needs to measure at the end of every stage the throughput it has received during this stage as well as the throughput that each of the other stations has received.\footnote{Similarly to \cite{cagalj05,konorski06}, we rely on the broadcast nature of the wireless medium which provides WLAN stations with the ability to measure the throughput received by the other stations.}

\section{Algorithm Analysis}\label{sec-analysis}


In this section we study analytically the performance of the system. First, we prove that when all the stations are well-behaved and implement the GAS algorithm, the WLAN converges to the optimal configuration (Section~\ref{sec-stability}). Then, we show that a selfish station does not have any incentive to deviate from the GAS algorithm (Section~\ref{sec-game}).

\subsection{Stability Analysis}\label{sec-stability}
We show that when all stations implement GAS, the WLAN is driven to the optimal configuration, i.e.,~$\tau_i = \tau_{opt} \; \forall i$. 

Formally, a point $\boldsymbol{\tau}_e \in [0,1]^n$ is a said to be a \emph{globally asymptotically stable equilibrium} of the system (\ref{eq-adaptive}) if ($i$) $\forall \epsilon > 0 \ \exists \delta > 0 $ such that $\|\boldsymbol{\tau}\left(0\right) - \boldsymbol{\tau}_e\| < \delta \Rightarrow \|\boldsymbol{\tau}\left(t\right) - \boldsymbol{\tau}_e\| < \epsilon \ \forall t$; and ($ii$) $\lim_{t \rightarrow \infty} \boldsymbol{\tau}\left(t\right) = \boldsymbol{\tau}_e \ \forall \boldsymbol{\tau} \left(0\right) \in [0,1]^n$. These conditions ensure that the system converges to $\boldsymbol{\tau}_e$ independently of its initial state and that the equilibrium point is unique.   We have the following result:
\begin{theorem}[Global stability]\label{th:one}
The target configuration $\boldsymbol\tau_{opt}=[\tau_{opt}\cdots \tau_{opt}]^T$ is a globally asymptotically stable equilibrium point under update (\ref{eq-tau_updt}) provided 
\begin{equation}
\gamma < \gamma_{max} := \left(\frac{nl}{T_{opt}}\left(1-\frac{\tau_{opt}}{2}\right)^{n-2}\right)^{-1} \nonumber
\end{equation}
and $n \ge 2$.
\end{theorem}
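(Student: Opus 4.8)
The plan is to establish global asymptotic stability by the direct (second) method of Lyapunov for discrete-time systems, using the radially unbounded, positive-definite candidate $V(\boldsymbol\tau)=\frac{1}{2}\|\boldsymbol\tau-\boldsymbol\tau_{opt}\|^2$. Since $V(\boldsymbol\tau_{opt})=0$ and $V(\boldsymbol\tau)>0$ otherwise, it suffices to prove that the increment $\Delta V:=V(\boldsymbol\tau(t+1))-V(\boldsymbol\tau(t))$ is strictly negative for every $\boldsymbol\tau\neq\boldsymbol\tau_{opt}$; the discrete Lyapunov theorem then yields both conditions $(i)$ and $(ii)$ of global asymptotic stability at once. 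Writing $\boldsymbol u:=\boldsymbol\tau-\boldsymbol\tau_{opt}$ and substituting update (\ref{eq-tau_updt}) gives the exact decomposition
\begin{equation}
\Delta V=\gamma\,\langle\boldsymbol u,\mathbf g(\boldsymbol\tau)\rangle+\tfrac{\gamma^2}{2}\,\|\mathbf g(\boldsymbol\tau)\|^2, \nonumber
\end{equation}
so the argument reduces to (a) showing the linear term is negative and (b) dominating the quadratic term by keeping $\gamma$ below the stated threshold. First I would verify that $\boldsymbol\tau_{opt}$ is a fixed point: at $\boldsymbol\tau_{opt}$ symmetry gives $r_j=r_{opt}$ for all $j$, so the first sum in (\ref{eq-ei}) vanishes and $D=0$ forces $F_i=0$ through (\ref{eq-F}), whence $\mathbf g(\boldsymbol\tau_{opt})=0$.

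For the linear term I would split $g_i$ as in (\ref{eq-ei}) into the \emph{consensus} part $\sum_{j\neq i}(r_j-r_i)$ and the \emph{drift} part $-F_i$, and treat their contributions to $\langle\boldsymbol u,\mathbf g\rangle$ separately. Using the shared-factor form in (\ref{eq-r}), namely $r_i=\frac{\hat\tau_i}{1-\hat\tau_i}\,C$ with $C=\frac{l}{T_s}\prod_{m}(1-\hat\tau_m)>0$ common to all stations, the map $x\mapsto x/(1-x)$ is strictly increasing, so $r_i\ge r_j\iff\hat\tau_i\ge\hat\tau_j$; since the clipping (\ref{eq-tauhat}) is monotone this gives $(r_i-r_j)(\tau_i-\tau_j)\ge0$ for every pair. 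A symmetrisation of the double sum then yields $\sum_i u_i\sum_{j\neq i}(r_j-r_i)=-\sum_{i<j}(r_i-r_j)(\tau_i-\tau_j)\le0$, so the consensus part never increases $V$ and vanishes only when all $\tau_i$ coincide. For the drift part I would use that $D(\boldsymbol\tau)\ge0$ on the feasible region — because $nr_{opt}$ is the maximum sum-rate, attained at the symmetric point by the convexity of the rate region invoked after (\ref{eq-r}), which also renders the third branch of (\ref{eq-F}) inactive. On $D\ge0$ the definition (\ref{eq-F}) gives $-u_iF_i=-|u_i|\,D/n$ in both remaining branches, so $-\sum_i u_iF_i=-\frac{D}{n}\|\boldsymbol u\|_1\le0$, vanishing only when $D=0$. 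Combining, $\langle\boldsymbol u,\mathbf g\rangle\le0$ with equality precisely when all stations share a common $\tau$ \emph{and} $D=0$, i.e. only at $\boldsymbol\tau_{opt}$.

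The remaining step is to absorb the quadratic term. Since $\mathbf g(\boldsymbol\tau_{opt})=0$, I would bound $\|\mathbf g(\boldsymbol\tau)\|$ by a Lipschitz estimate $\|\mathbf g\|\le L\|\boldsymbol u\|$: the derivatives of the throughputs in (\ref{eq-r}) are controlled because the saturation floor $\hat\tau_m\ge\tau_{opt}/2$ forces every factor $1-\hat\tau_m\le1-\tau_{opt}/2$, so the products appearing in $\partial r_j/\partial\tau_k$ are at most $(1-\tau_{opt}/2)^{n-2}$, while the $F_i$ contribution is bounded through Theorem \ref{lem:five}. Collecting the $n$ coupling terms and evaluating the prefactor $l/T_s$ at the optimum produces exactly $L=\frac{nl}{T_{opt}}(1-\tau_{opt}/2)^{n-2}=\gamma_{max}^{-1}$. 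With this bound, $\Delta V\le\gamma\langle\boldsymbol u,\mathbf g\rangle+\frac{\gamma^2}{2}L^2\|\boldsymbol u\|^2$, and a matching lower bound on $-\langle\boldsymbol u,\mathbf g\rangle$ lets the linear term dominate whenever $\gamma<\gamma_{max}$, giving $\Delta V<0$ off equilibrium. This also explains the footnote that raising the floor on $\tau_i$ enlarges the admissible $\gamma$: a higher floor shrinks $(1-\tau_{opt}/2)^{n-2}$ and hence $L$.

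The delicate points I anticipate are twofold. First, the linear and quadratic bounds must be made quantitatively compatible so that the crossover is \emph{exactly} $\gamma_{max}$ rather than merely a constant multiple of it; this requires pairing the negative linear term (bounded here in the pairwise / $\ell_1$ sense) against the $\ell_2$ quadratic term with the right constants, presumably by routing both through the same throughput Lipschitz estimate so that the factor $L$ cancels cleanly. Second, one must confirm strict negativity \emph{globally} and not just near $\boldsymbol\tau_{opt}$ — in particular that the only common point where both the consensus part and $D$ vanish is $\boldsymbol\tau_{opt}$ itself, which follows from the strict monotonicity of $x/(1-x)$ together with uniqueness of the sum-rate maximiser; this uniqueness is precisely what upgrades the conclusion from local to the claimed global asymptotic stability on $[0,1]^n$.
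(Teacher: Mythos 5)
Your overall strategy (discrete Lyapunov, direct method) matches the paper's in spirit, but two of your key steps do not go through, and they are exactly the places where the paper has to work hardest. First, the claim that $D(\boldsymbol{\tau})\ge 0$ on the feasible region, so that the third branch of (\ref{eq-F}) is ``inactive,'' is false: $nr_{opt}$ is the maximum sum-rate only among \emph{symmetric} attempt vectors. At asymmetric points the aggregate throughput can exceed $nr_{opt}$ (e.g.\ one station with $\hat{\tau}$ near $1$ and the others near the floor $\tau_{opt}/2$ gives total throughput approaching $l/T_t > nr_{opt}$, since there are essentially no collisions or idle slots). This is precisely why (\ref{eq-F}) carries the branch $F_i = D/(n-1)$ for $D<0$, and in that regime $F_i<0$ for \emph{every} station, so $-u_iF_i>0$ for any station with $\tau_i>\tau_{opt}$: your drift term then contributes with the wrong sign to $\langle\boldsymbol u,\mathbf g\rangle$, and the inner product is no longer guaranteed nonpositive. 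The paper handles this case separately (Lemmas \ref{lem:seven} and \ref{lem:eight}, Cases 1c and 2c of the proof), showing by a monotonicity argument on the function $G$ that even when $D<0$ the largest $\tau_i$ strictly decreases; some substitute for that argument is unavoidable.

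Second, the absorption of the quadratic term $\frac{\gamma^2}{2}\|\mathbf g\|^2$ is asserted rather than proved. To conclude $\Delta V<0$ for all $\gamma<\gamma_{max}$ you need not only the Lipschitz upper bound $\|\mathbf g\|\le L\|\boldsymbol u\|$ but a \emph{lower} bound of the form $-\langle\boldsymbol u,\mathbf g\rangle\ge \tfrac{1}{2}\gamma_{max}\|\mathbf g\|^2$ (equivalently a coercivity estimate $-\langle\boldsymbol u,\mathbf g\rangle\ge c\|\boldsymbol u\|\,\|\mathbf g\|$ with the matching constant), and the pairwise/$\ell_1$ estimates you derive for the linear term do not supply one; indeed $\langle\boldsymbol u,\mathbf g\rangle$ can vanish off equilibrium (all $\tau_i$ clipped to $\tau_{opt}/2$ makes the consensus part zero), so the constants have to be tracked carefully. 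The paper avoids this difficulty entirely by taking $V=\|\boldsymbol\tau-\boldsymbol\tau_{opt}\|_\infty$ instead of the squared Euclidean norm: with the sup-norm there is no quadratic remainder to dominate, and one shows directly, componentwise and case by case (using Lemma \ref{lem:one}, Lemma \ref{lem:six} and Theorem \ref{lem:five}), that every $|\tau_i(t+1)-\tau_{opt}|$ falls strictly below the current maximum deviation whenever $\gamma<\gamma_{max}$. If you want to salvage the $\ell_2$ route you would need to prove the coercivity estimate and separately dispose of the $D<0$ region; as written, the proposal has genuine gaps at both points.
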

\emph{Proof:} See Appendix. 

The proof of Theorem \ref{th:one} makes use of Lyapunov's direct method \cite{Lyapunov}.  Namely, a point $\boldsymbol{\tau}_e$ is the globally asymptotically stable equilibrium of the system if there exists a continuous radially unbounded function $V: \mathbb{R}^n \rightarrow \mathbb{R}$ such that $V\left(\boldsymbol{\tau} - \boldsymbol{\tau}_e\right) > 0 \; \forall \boldsymbol{\tau} \neq \boldsymbol{\tau}_e$, $V\left(\boldsymbol{\tau}_e\right) = 0$ and
\begin{equation}
V\left(\boldsymbol{\tau}\left(t+1\right) - \boldsymbol{\tau}_e \right) < V\left(\boldsymbol{\tau}\left(t\right) - \boldsymbol{\tau}_e \right)
\label{Lyapunov}
\end{equation}
To apply this result we must find a Lyapunov function $V$ with these properties.   Selecting  $V\left(\boldsymbol{\tau} - \boldsymbol{\tau}_e\right) = \|\boldsymbol{\tau} - \boldsymbol{\tau}_e\|_\infty$ as a candidate Lyapunov function, with the equilibrium point $\boldsymbol{\tau}_e= \boldsymbol\tau_{opt}=[\tau_{opt}\cdots \tau_{opt}]^T$, the proof of Theorem \ref{th:one} establishes that 
\begin{equation}
\|\boldsymbol\tau\left(t+1\right) - \boldsymbol\tau_{opt}\|_\infty < \|\boldsymbol\tau\left(t\right) - \boldsymbol\tau_{opt}\|_\infty
\label{Lyapunov_2}
\end{equation} 
That is, Lyapunov equation (\ref{Lyapunov}) is satisfied by this choice of $V$ and so $\boldsymbol\tau_{opt}$ is a globally asymptotically stable equilibrium.

It remains to select the value of the parameter $\gamma$.  This involves a tradeoff: the smaller $\gamma$, the slower the rate of convergence is; however, if $\gamma$ is set too large the system risks instability (as shown by Theorem \ref{th:one}). Following the same rationale as the \emph{Ziegler-Nichols} rules \cite{franklin}, which have been proposed to address a similar tradeoff in the context of classical control theory, we recommend setting $\gamma$ to half of the value at which the system turns unstable, i.e., $\gamma = \gamma_{max}/2$.

\subsection{Game Theoretic Analysis}\label{sec-game}

In the previous section we have seen that, when all stations implement GAS, the system converges to the target configuration, i.e., all stations play with $\tau_i = \tau_{opt}$ and receive a throughput equal to $r_{opt}$. In this section we conduct a game theoretic analysis to show that a selfish station cannot obtain more throughput than $r_{opt}$ by following a different strategy from GAS. 
In what follows, we say that a station is \emph{honest} or well-behaved when it implements GAS to configure its 802.11e parameters, while we say that it is \emph{selfish} or misbehaving when it plays a different strategy from GAS to configure its parameters with the aim of obtaining some gain.

The game theoretic analysis conducted in this section assumes that users are \emph{rational} and want to maximize their own benefit or \emph{utility}, which is given by the throughput. The model is based on the theory of \emph{repeated games} \cite{repeated}. With repeated games, time is divided into stages and a player can take new decisions at each stage based on the observed behavior of the other players in  previous stages. This matches our algorithm, where time is divided into intervals and stations update their configuration at each interval. Like other previous analyses on repeated games \cite{cagalj05,konorski06}, we consider an infinitely repeated game, which is a common assumption when the players do not know when the game will finish. Using this model, the following theorem shows the effectiveness of GAS against a selfish station. Note that the theorem does not impose any restriction on the strategy followed by the selfish station, which may play with all the four 802.11e parameters changing their setting over time.

\begin{theorem}\label{th:two}  Let us consider a selfish station that uses a configuration that can vary over time. If all the other stations implement the GAS algorithm, the throughput received by this station will be no larger than $r_{opt}$.
\end{theorem}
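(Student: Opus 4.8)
My plan is to read the throughput ``received'' by the time-varying selfish station $s$ as its long-run time-average $\bar r_s := \limsup_{T\to\infty}\frac1T\sum_{t=0}^{T-1} r_s(\boldsymbol{\hat\tau}(t))$, the limit-of-means payoff natural in an infinitely repeated game, and to show $\bar r_s \le r_{opt}$ for \emph{any} sequence of configurations played by $s$. The whole argument rests on averaging the honest stations' update equations over time rather than on convergence. Rewriting (\ref{eq-ei}) as $g_i(\boldsymbol\tau)=\sum_{j} r_j(\boldsymbol\tau)-n r_i(\boldsymbol\tau)-F_i(\boldsymbol\tau)$ and telescoping (\ref{eq-tau_updt}), for every honest station $i\ne s$ I obtain $\frac1T\sum_{t=0}^{T-1} g_i(\boldsymbol\tau(t)) = \frac{\tau_i(T)-\tau_i(0)}{\gamma T}$. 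Provided the honest states stay bounded, the right-hand side tends to $0$, so each honest station's update averages out to zero over time. This is the repeated-game analogue of the steady-state condition $g_i(\boldsymbol\tau^\infty)=0$ used earlier to motivate the choice of $F_i$.

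Next I would sum these vanishing time-averages over the $n-1$ honest stations. Using $\sum_{i\ne s} r_i = \sum_j r_j - r_s$, the throughput terms collapse to
\begin{equation}
\sum_{i\ne s} g_i(\boldsymbol\tau) = n\, r_s(\boldsymbol\tau) - \sum_j r_j(\boldsymbol\tau) - \sum_{i\ne s} F_i(\boldsymbol\tau). \nonumber
\end{equation}
The design constraint (\ref{eq-climit}), which by construction holds \emph{pointwise} for all three branches of (\ref{eq-F}), gives $\sum_{i\ne s} F_i(\boldsymbol\tau)\le (n-1)\frac{D(\boldsymbol\tau)}{n-1}=D(\boldsymbol\tau)=n r_{opt}-\sum_j r_j(\boldsymbol\tau)$. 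Substituting this bound cancels the $\sum_j r_j$ terms and leaves $\sum_{i\ne s} g_i(\boldsymbol\tau(t))\ge n\big(r_s(\boldsymbol\tau(t))-r_{opt}\big)$ at every stage $t$. Averaging over $t$ and letting $T\to\infty$, the left-hand side tends to $0$ by the previous paragraph while the right-hand side has limit superior $n(\bar r_s - r_{opt})$, whence $\bar r_s\le r_{opt}$, exactly the claim. The strength of this route is that it never requires the system to converge: it only uses that a bounded quantity has vanishing average increments, so it tolerates an adversary who keeps changing its configuration.

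The main obstacle is therefore the boundedness of the honest states $\tau_i(t)$, which is what licenses $\frac{1}{\gamma T}(\tau_i(T)-\tau_i(0))\to 0$. I would establish this by a restoring-force argument at the two clamping boundaries of (\ref{eq-tauhat}). When $\tau_i$ is so large that $\hat\tau_i=1$, station $i$ transmits in every slot, so every competing transmission collides and $r_j=0$ for all $j\ne i$; substituting this into $g_i$ together with the $\tau_i>\tau_{opt}$ branch of (\ref{eq-F}) yields $g_i<0$ for $n\ge 2$, i.e.\ a downward push. Symmetrically, when $\tau_i$ is small enough that $\hat\tau_i=\tau_{opt}/2$, station $i$ is starved relative to the others, the punishment term $\sum_{j\ne i}(r_j-r_i)$ is positive while $F_i\le 0$, giving $g_i>0$ and an upward push. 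Since $g_i$ depends on $\boldsymbol\tau$ only through the clamped, hence bounded, $\boldsymbol{\hat\tau}$, it is itself bounded, so these two restoring forces confine each $\tau_i$ to a compact interval up to a single $\gamma$-step overshoot, which suffices. Verifying the signs of $g_i$ at both boundaries, including the degenerate $D(\boldsymbol\tau)<0$ case there, is the one genuinely technical point; the rest is the elementary algebra above.
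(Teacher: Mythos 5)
Your proposal is correct and follows essentially the same route as the paper's proof: telescoping the honest stations' updates, using an upper bound on $\tau_i$ (obtained via the same restoring-force argument at $\hat{\tau}_i=1$) to make the time-averaged increments vanish, and invoking the pointwise bound $F_i \le D(\boldsymbol{\tau})/(n-1)$ to cancel the $\sum_j r_j$ terms and conclude $\bar{r}_s \le r_{opt}$. The only difference is cosmetic: you sum the vanishing averages over all $n-1$ honest stations, whereas the paper works with a single honest station and assumes by symmetry that all honest stations obtain the same throughput $r(t)$.
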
 
\emph{Proof:} See Appendix. 

\newtheorem{corollary}{Corollary}
\begin{corollary}\label{cor:one} All-GAS is a Nash equilibrium of the game.
\end{corollary}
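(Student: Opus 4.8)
The plan is to unfold the definition of Nash equilibrium and show that the two theorems already proved supply exactly the two facts needed. Recall that a strategy profile is a Nash equilibrium if no single player can strictly increase its utility --- here, its throughput --- by unilaterally deviating while all other players keep their strategies fixed. So I would fix an arbitrary station, say station $s$, assume every other station implements GAS, and compare two scenarios: (a) station $s$ also implements GAS, and (b) station $s$ plays an arbitrary deviating strategy, possibly time-varying across all four 802.11e parameters.

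The first ingredient is the on-path payoff. By Theorem~\ref{th:one}, when all stations (including $s$) implement GAS, the system is globally asymptotically stable with equilibrium $\boldsymbol\tau_{opt}=[\tau_{opt}\cdots\tau_{opt}]^T$; hence $\tau_i\to\tau_{opt}$ for every $i$ from any initial condition, and in particular station $s$ receives throughput converging to $r_{opt}$. The second ingredient is the deviation payoff. By Theorem~\ref{th:two}, if station $s$ instead follows \emph{any} strategy different from GAS while the remaining $n-1$ stations run GAS, its received throughput is bounded above by $r_{opt}$. Putting these together, deviating yields a throughput no greater than $r_{opt}$, which is exactly the throughput $s$ obtains by conforming. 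Therefore no unilateral deviation is strictly profitable, and the all-GAS profile is a Nash equilibrium, establishing Corollary~\ref{cor:one}.

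The step requiring the most care is not the logical combination, which is immediate, but pinning down the \emph{utility} in the infinitely repeated game so that both theorems speak about the same quantity. Theorem~\ref{th:one} delivers only asymptotic convergence of the per-stage throughput to $r_{opt}$, while Theorem~\ref{th:two} bounds the throughput of the deviator; to compare them cleanly I would define the player's utility as the long-run (time-averaged) throughput over stages, which coincides with the steady-state per-stage throughput whenever the algorithm converges. Under this definition the conforming payoff is exactly $r_{opt}$ and the deviating payoff is at most $r_{opt}$, so the inequality is valid as stated. The only genuine subtlety, then, is verifying that the utility notion used to phrase the equilibrium is consistent with the throughput quantities appearing in Theorems~\ref{th:one} and~\ref{th:two}; once that is fixed, the corollary follows directly with no further computation.
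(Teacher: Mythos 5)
Your argument is correct and follows essentially the same route as the paper, which likewise derives the corollary directly from Theorem~\ref{th:two} (no profitable unilateral deviation when all others play GAS), with the conforming payoff of $r_{opt}$ supplied by the stability result. Your extra care in fixing the long-run time-averaged throughput as the common utility notion is a reasonable elaboration of what the paper leaves implicit, not a different approach.
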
\begin{proof}
By Theorem \ref{th:two}, if all other stations play GAS, then the best response of this station is to play GAS as well since it cannot benefit from playing a different strategy. Thus, \emph{All-GAS} is a Nash equilibrium.
\end{proof}

This shows that, if all stations start playing with no previous history, then none of them can gain by deviating from GAS. In addition to this, in repeated games it is also important to make sure that, if at some point the game has a given history, a selfish station cannot take advantage of this history to obtain any gain by playing a different strategy from GAS. The following theorem confirms that \emph{All-GAS} is a Nash equilibrium of any subgame (where a \emph{subgame} is defined as the game resulting from starting to play with a certain history). Therefore a selfish station cannot benefit by deviating from GAS independently of the previous history of the game.

\begin{theorem}\label{th:three} 
All-GAS is a subgame perfect Nash equilibrium of the game.
\end{theorem}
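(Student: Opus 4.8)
The plan is to show that All-GAS induces a Nash equilibrium in every subgame, which is the definition of subgame perfection. Recall that a subgame is the continuation game that begins after some arbitrary finite history $h$ of past configurations. The central observation I would exploit is that GAS is a state-feedback algorithm: by (\ref{eq-tau_updt})--(\ref{eq-F}), an honest station's next transmission probability depends only on the current state $\boldsymbol{\tau}(t)$ and not on the full history. Consequently, the entire influence of a history $h$ on the honest stations' subsequent play is summarized by the single state $\boldsymbol{\tau}(t_0)$ at which the subgame begins. Each subgame is therefore structurally identical to the original repeated game, differing only in its initial condition.

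Given this reduction, I would invoke Theorem~\ref{th:two} directly. That theorem places no restriction on the selfish station's strategy (it may vary all four parameters over time) and, crucially, its bound $r_s \le r_{opt}$ is obtained from the steady-state relation (\ref{eq-climit}) together with the global Lyapunov argument of Theorem~\ref{th:one}; neither of these depends on the system starting from the optimal configuration. Hence the bound holds for an arbitrary initial state $\boldsymbol{\tau}(t_0)$, and in particular for the state inherited from any history $h$ --- including states in which the honest stations are already \emph{punishing} with reduced $CW$. Thus, in every subgame, a station that deviates from GAS receives throughput at most $r_{opt}$, whereas by Theorem~\ref{th:one} a station playing GAS drives the system to $\tau_{opt}$ and attains $r_{opt}$. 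GAS is therefore a best response in every subgame, and by the same reasoning as Corollary~\ref{cor:one} each subgame admits All-GAS as a Nash equilibrium.

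One subtlety I would address carefully is the treatment of the continuation payoff. Because the game is infinitely repeated and a player's utility is its long-run throughput, the finite prefix contained in the history $h$ contributes nothing to the limiting average; the continuation utility coincides with the long-run throughput of the subgame, to which the asymptotic bound $r_{opt}$ applies. This is what lets me transfer the conclusion of Theorem~\ref{th:two} from the root game to an arbitrary subgame without re-deriving it.

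The main obstacle, and the point I expect to require the most care, is justifying that the bound of Theorem~\ref{th:two} is genuinely uniform over all reachable initial states rather than tied to a neighborhood of $\tau_{opt}$. This is precisely where the global (as opposed to merely local) nature of the stability analysis is essential: because Theorem~\ref{th:one} guarantees global asymptotic convergence from any $\boldsymbol{\tau}(0) \in [0,1]^n$, and the throughput constraint (\ref{eq-climit}) is enforced by the choice of $F_i$ everywhere on the domain, the selfish-throughput bound cannot be circumvented by steering the system into an unfavorable state through a crafted history. Establishing this uniformity is what closes the gap between the single-game result (Corollary~\ref{cor:one}) and subgame perfection.
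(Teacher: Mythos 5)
Your proposal is correct and follows essentially the same route as the paper, whose entire proof of Theorem~\ref{th:three} is the observation that the proof of Theorem~\ref{th:two} is independent of the past history (its only history-dependence is the $\frac{\tau_{max}-\tau_i^{initial}}{\gamma}$ term, which vanishes in the long-run average) and therefore applies verbatim to every subgame. Your elaboration of the state-feedback reduction and the uniformity over initial states is a faithful expansion of that one-line argument; the only slight inaccuracy is attributing the bound of Theorem~\ref{th:two} to the Lyapunov argument of Theorem~\ref{th:one}, whereas it is actually obtained directly from the boundedness of $\tau_i$ and a telescoping-sum argument.
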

\emph{Proof:} See Appendix. 

\section{Performance Evaluation}\label{sec-performance}

In this section we thoroughly evaluate GAS by conducting an extensive set of simulations to show that $(i)$ a selfish station cannot benefit from following a different strategy from GAS, and $(ii)$ when all stations are well-behaved, GAS provides optimal performance, is stable and reacts quickly to changes. For the simulations, we have implemented our algorithm in OMNET++ ({\ttfamily\url{www.omnetpp.org}}). The physical layer parameters of IEEE 802.11g and a fixed payload size of 1500 bytes have been used in all the experiments. In the simulations of Sections \ref{sec-firstsim} to \ref{sec-perturbations}, we focus on the $CW$ parameter: we assume that all stations (both honest and selfish) use a fixed configuration of their $AIFS$, $TXOP$ and $m$ parameters equal to the target configuration and play only with the $CW$ parameter. Then, in the simulations of Section~\ref{sec-parameters} we study all the four parameters and show that selfish stations cannot obtain any benefit from any configuration of these parameters. Unless otherwise stated, we assume that all stations are sending traffic under saturation conditions. For all simulation results, 95\% confidence intervals are below 0.5\%.

\subsection{Selfish station behavior}\label{sec-firstsim}

In order to gain insight into the impact of the behavior of a selfish station with the GAS algorithm, we evaluate the resulting throughput distribution when a selfish station uses a fixed $CW_i$ and all other stations implement GAS. Figure~\ref{fig:constantcw} shows the throughput obtained by the selfish station and an honest one as a function of the $CW_i$ used by the selfish station, $CW_s$, when there are $n = 10$ stations in the WLAN. We observe from this figure that there are some $CW_s$ values for which the selfish station obtains a larger throughput than the honest one, and others for which the honest stations obtain a larger throughput. However, as long as the selfish station plays with the $CW_s$ value that maximizes its throughput, it does not receive more throughput than the honest station, and hence does not have any gain over an honest station as a result of its selfish behavior.

\begin{figure}
\includegraphics[width=25em, angle=0]{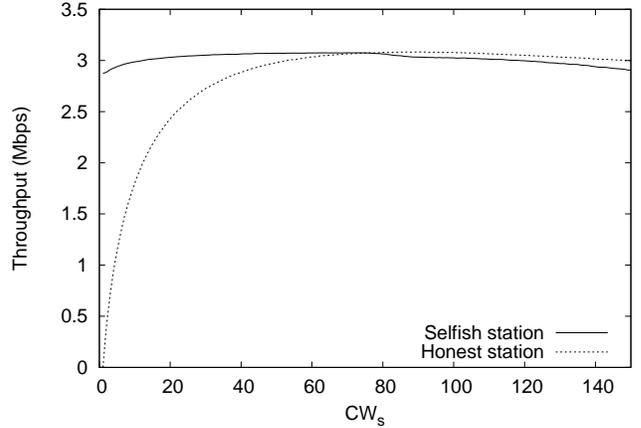}
\caption{Throughput of a selfish and a honest station as a function of the $CW_i$ of the selfish station.} 
\label{fig:constantcw}
\end{figure}

\subsection{Protection against selfish stations}

According to the game theoretic analysis conducted in Section \ref{sec-game}, a station cannot obtain more throughput with a selfish strategy than by playing GAS. To validate this result, we evaluate the throughput obtained by a selfish station with the following strategies. In the first strategy (\emph{static}), the selfish station uses the fixed configuration of $CW_{i}$ that provides the largest throughput, obtained from performing an exhaustive search over all possible configurations (like in Figure \ref{fig:constantcw}). In the second strategy (\emph{adaptive 1}), the selfish station periodically tries $CW_i = 2$ to gain throughput and when it realizes that its throughput is below $r_{opt}$, it assumes that it has been detected as selfish and switches back to $CW_i = CW_{opt}$. The third strategy (\emph{adaptive 2}) is similar to the previous one but instead of switching back to $CW_{opt}$, the station increases its $CW_i$ by 5. In the last strategy (\emph{adaptive 3}), the selfish station decreases its $CW_i$ by 5 as long as its throughput is larger than in the previous stage and increases it by 5 otherwise. Figure~\ref{fig:variablecw} compares the throughput obtained with each of these strategies against that obtained with GAS for different $n$ values. We observe that, when all other stations play GAS, a given station maximizes its payoff by playing GAS, as it obtains a larger throughput with GAS than with any of the other strategies. This confirms the result of Theorem~2.

\begin{figure}
\includegraphics[width=17em, angle=270]{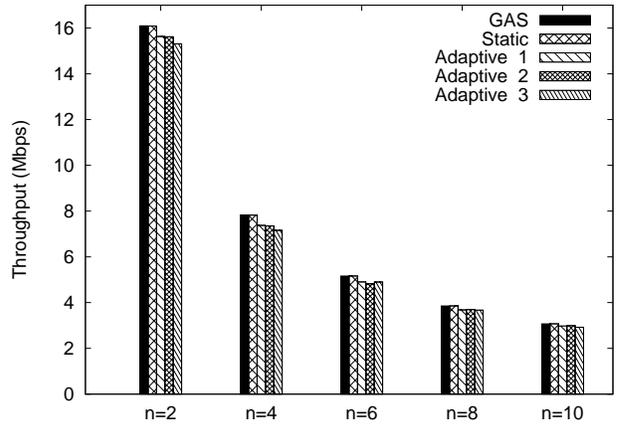}
\caption{Throughput of a station with different strategies.} 
\label{fig:variablecw}
\end{figure}

\subsection{Throughput performance}

The GAS algorithm has been designed with the goal of optimizing throughput performance when all stations play GAS. To verify this goal, we evaluate the throughput performance as a function of the number of stations $n$ when all stations play GAS. As a benchmark against which to compare the throughput performance, we consider a WLAN in which the $CW_i$ of all stations is statically set to the optimal value $CW_{opt}$. The results from the above experiment are illustrated in Table \ref{tab:validation}. We observe that the throughput performance resulting from GAS follows very closely the optimal configuration. Based on this, we conclude that the proposed algorithm is effective in providing optimal throughput performance.

\begin{table}
\begin{center}
\caption{Throughput per Station (Mbps)}
\begin{tabular}{|c|c|c|}
\hline
$n$ & \emph{All-GAS} & \emph{All-}$CW_{opt}$ \\
\hline
\hline
4 & 7.82 & 7.83 \\
8 & 3.85 & 3.86 \\
12 & 2.56 & 2.56 \\
16 & 1.91 & 1.92 \\
20 & 1.53 & 1.53 \\
\hline
\end{tabular}
\label{tab:validation}
\end{center}
\end{table}

\subsection{Stability and speed of reaction}

To validate that our system guarantees a stable behavior, we analyze the evolution over time of the parameter $CW_i$ for our $\gamma$ setting and a configuration of this parameter 10 times larger, in a WLAN with 10 stations. We observe from Figure \ref{fig:stability} that with the proposed configuration (label ``$\gamma$''), the $CW_i$ only presents minor deviations around its stable point of operation, while if a larger setting is used (label ``$10\gamma$''), the $CW_i$  has a strong unstable behavior with drastic oscillations. 

\begin{figure}
\includegraphics[width=25em, angle=0]{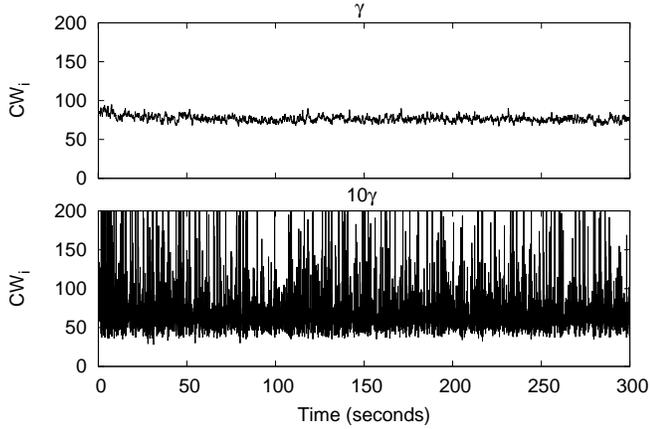}
\caption{System stability for different $\gamma$ settings.}
\label{fig:stability}
\end{figure}

To investigate the speed with which the system reacts against a selfish station, we consider a WLAN with 10 stations where initially all stations play GAS and then, after 50 seconds, one station changes its $CW_i$ to 2. Figure \ref{fig:reaction} shows the evolution of the throughput of the selfish station over time. We observe from the figure that with our setting (label ``$\gamma$''), the system reacts quickly, and in less than a few tens of seconds the selfish station does no longer benefit from its misbehavior. In contrast, for a setting of this parameter 10 times smaller (label ``$\gamma/10$''), the reaction is very slow and even 5 minutes afterwards, the selfish station is still receiving about 1 Mbps extra throughput.

Since with a larger setting of $\gamma$ the system suffers from instability while with a smaller one it reacts too slowly, we conclude that the proposed setting provides a good tradeoff between stability and speed of reaction.

\begin{figure}
\includegraphics[width=25em, angle=0]{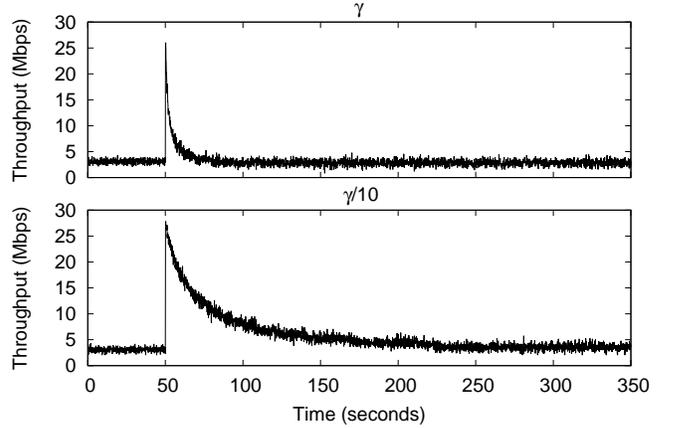}
\caption{Speed of reaction for different $\gamma$ settings.} 
\label{fig:reaction}
\end{figure}

\subsection{Comparison against other approaches}\label{sec-comparison}
In order to illustrate the advantages of GAS over other approaches, we compare the performance of GAS against CRISP \cite{konorski06} and the standard DCF configuration when there is a selfish station in the WLAN. In particular, we consider a WLAN with a selfish station that plays with the $CW_i$ value that maximizes its payoff and show the throughput received by an honest station.\footnote{To obtain the $CW_i$ that maximizes the selfish station's throughput, we have evaluated all possible $CW_i$ values and chosen the one that provides the largest throughput to the selfish station.}

The results, depicted in Figure~\ref{fig:comparison}, show that GAS outperforms very substantially CRISP and DCF. Since CRISP has been designed to prevent only extremely selfish users with $CW_i = 2$ or $CW_i = 1$, a selfish user with a slightly larger $CW_i$ goes undetected and can gain very significant throughput, leaving honest stations with very low throughputs as shown in the figure. With the standard DCF configuration, a selfish station maximizes its gain with $CW_i=1$, which yields zero throughput for the honest stations.

\begin{figure}
\includegraphics[width=25em, angle=0]{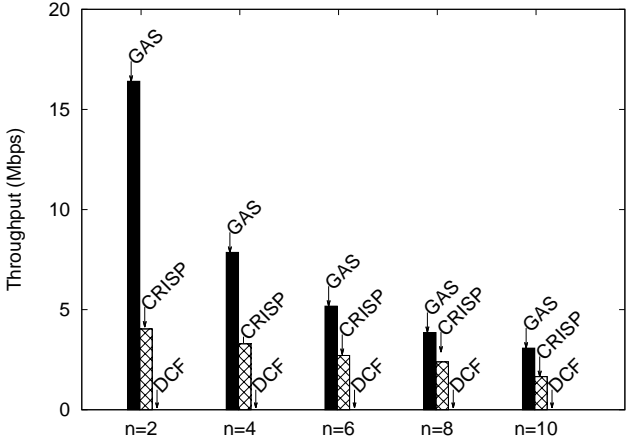}
\caption{Comparison against CRISP and DCF.} 
\label{fig:comparison}
\end{figure}

%

\subsection{Robustness to perturbations}\label{sec-perturbations}
One of the goals in the design of GAS has been its robustness against any kind of perturbation. Indeed, as it has been proved by Theorem \ref{th:one}, our system is guaranteed to converge to the desired point of operation independent of the initial state. Therefore, no matter the state to which the system is brought by a perturbation, it will always be able to recover.

In order to show the above feature, we consider the following experiment. In a WLAN with 15 stations, all running GAS, we introduce a burst of errors that affect one of the stations during one second. Figure~\ref{fig:perturbations} shows the evolution of the throughput of the affected station and one of the other stations over time. The figure also shows the behavior provided by CRISP under the same conditions. 

We observe from the figure that GAS quickly converges to the desired point of operation after the perturbation. Indeed, right after the perturbation the station that suffered the burst of errors believes that the other stations are behaving selfishly (as they have received a larger throughput) and plays with a smaller $CW_i$ for a while, which results in a larger throughput for this station. However, after a short transient all stations go back to playing with the optimal $CW_{opt}$.

In contrast to the above, CRISP does not show a robust behavior. With CRISP, the affected station plays $CW_i = 2$ after the burst to punish the others. These react by decreasing their $CW_i$ to 2 and eventually to 1, and from this point on stations keep punishing each other which brings the total throughput in the WLAN practically to 0. The WLAN remains in this state for the rest of the simulation run, which is 300 seconds long (only the first 20 seconds are shown in the graph).

\begin{figure}
\includegraphics[width=25em, angle=0]{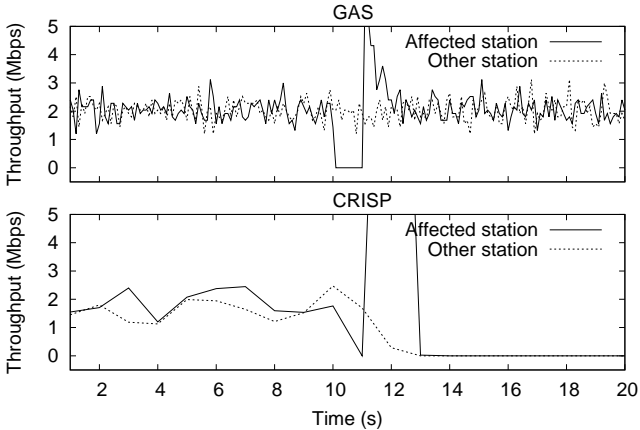}
\caption{Robustness to Perturbations of GAS vs. CRISP.}
\label{fig:perturbations}
\end{figure}

\subsection{Other 802.11e parameters}\label{sec-parameters}
In the experiments so far, we have considered that selfish users only play with the $CW_i$ parameter. 
However, according to the 802.11e standard, there are a number of additional parameters a user can play with, namely the backoff stage $m$, the arbitration interframe space $AIFS$ and the transmission opportunity $TXOP$. In order to show that a selfish user cannot benefit from playing with any of these parameters, we have conducted a number of experiments in which the parameters are set to different values from the ones given in Section~\ref{sec-objectives}. 
For each of the settings considered for these parameters, the selfish station uses the $CW_{min}$ that maximizes its throughput.

The results of the above experiment are given in Figure~\ref{fig:parameters} for different $n$ values. We observe that the selfish station never obtains any gain by deviating from GAS independent of the parameters it plays with. We conclude that GAS is effective not only against the $CW_i$ parameter but also against all the other configurable parameters of the 802.11e standard. This confirms the result of Theorem \ref{th:two}, according to which a station cannot benefit from following a selfish strategy to configure (all of) its parameters. This result is particularly relevant since the previous approaches \cite{cagalj05,konorski06} focus only on the $CW_i$ parameter and are not evaluated against any other parameter.

\begin{figure}
\includegraphics[width=25em, angle=0]{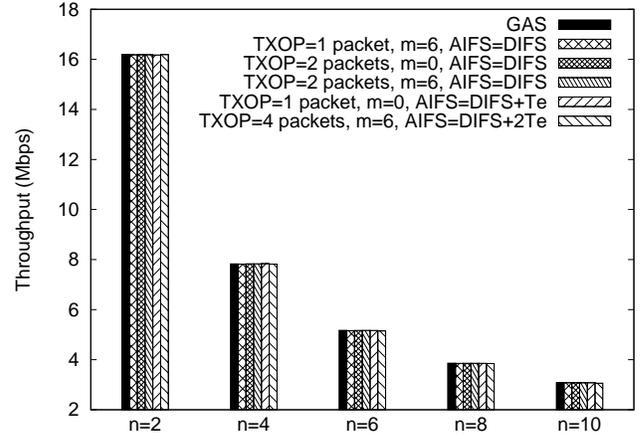}
\caption{Protection of GAS against selfish strategies with different 802.11e parameters.} 
\label{fig:parameters}
\end{figure}

\subsection{Non-saturated stations}

So far we have assumed that all stations are saturated, 
which is the most relevant case for selfishness and the only one considered in \cite{cagalj05,konorski06}. However, GAS can be easily extended to support non-saturated stations as follows: ($i$) to avoid reacting upon other stations receiving more throughput, a non-saturated station does not use the GAS algorithm to compute its configuration; ($ii$) a saturated station only includes in the sum of (\ref{eq-ei}) those stations that are receiving more throughput, thus excluding the non-saturated stations; and ($iii$) to compute $CW_{opt}$, we take into account the sending rate of the non-saturated stations (following, e.g., \cite{tvt}). To show the performance of GAS with non-saturated stations, we consider a WLAN with 10 stations, half of them saturated and the other half sending at a rate equal to half of the saturation throughput. In this scenario, a station that behaves selfishly using the $CW_i$ value that maximizes its gain obtains a throughput of 4.51 Mbps, while it would obtain 4.52 Mbps if it played GAS. This confirms the effectiveness of the algorithm in thwarting selfish behaviors in presence of non-saturated stations.

\section{Experimental Prototype}\label{sec-experiment}

One of the advantages of GAS is that it relies on functionality readily available in standard devices and therefore can be implemented with current off-the-shelf hardware. In order to show this, we have implemented our algorithm on Linux-based laptops. In this section, we report the experience gained from this prototype.

Our implementation is based on Linux kernel 2.6.24 laptops equipped with Atheros AR5212 cards operating in 802.11a mode and employing the MadWifi v0.9.4 driver. 
The GAS algorithm runs as a user-space application. In order to collect information about other stations' throughput, GAS uses a virtual device configured in promiscuous mode and monitors all frames that belong to the same BSS. With this information, it computes the $CW$ configuration by executing the algorithm described in Section \ref{sec-algorithm} and updates the computed $CW_{min}$ and $CW_{max}$ parameters in the driver every beacon interval by means of a private {\ttfamily IOCTL} call.

In order to validate our implementation, we deployed a small testbed consisting of three laptops, two of them sending traffic to the third one. For the traffic generation, nodes ran the {\ttfamily iperf} 
tool to generate 1470 byte UDP packets. The sending rate at each station was set to 20~Mbps, ensuring that they always had a packet ready for transmission. 
With this setting, we ran 
the experiment and measured the resulting throughput performance as a function of the configuration strategy followed by each of the two sending stations. In particular, we considered the following strategies: $(i)$ both stations employing the GAS algorithm to compute their configuration $(ii)$ both stations using a fixed $CW$ configuration (for a wide range of $CW$ values), and $(iii)$ one station executing GAS and the other one using a fixed $CW$ configuration.

The results of the above experiments are provided in Table~\ref{tab:implem}. For each experiment, the average throughput and the standard deviation of 5 runs of 300 seconds each is given. From these results, we draw the following conclusions: ($i$) when all stations are well-behaved, GAS not only achieves a fair throughput allocation but also outperforms any fair static $CW_i$ configuration, which shows that GAS is effective in providing \emph{optimal throughput performance}; and ($ii$) GAS is also effective against \emph{selfish configurations}, as shown by the fact that, when one station plays GAS, the other station is better off playing GAS than any other configuration.

\begin{table}
\begin{center}
\caption{Experimental Results}
\label{tab:implem}
\begin{tabular}{|c|c|c|c|}
\hline
Strategy & Strategy & Throughput & Throughput\\
station 1 & station 2 & station 1 & station 2\\
$(CW_1)$ & $(CW_2)$ & $(Mbps)$ & $(Mbps)$\\
\hline \hline
GAS & GAS & $14.91 \pm 0.07$ & $14.92 \pm 0.12$ \\ \hline
2 & 2 & $13.24 \pm 0.07$ & $13.21 \pm 0.07$\\
4 & 4 & $14.31 \pm 0.1$ & $14.31 \pm 0.12$\\
8 & 8 & $14.84 \pm 0.08$ & $14.87 \pm 0.08$\\
16 & 16 & $14.71 \pm 0.03$ & $14.68 \pm 0.04$\\
32 & 32 & $13.68 \pm 0.06$ & $13.72 \pm 0.06$\\
64 & 64 & $11.83 \pm 0.05$ & $11.82 \pm 0.06$\\
128 & 128 & $9.18 \pm 0.03$ & $9.18 \pm 0.03$\\ \hline
\multirow{7}{*}{GAS} & 2 & $12.44 \pm 0.01$ & $14.51 \pm 0.12$\\
& 4 & $13.83 \pm 0.11$ & $14.78 \pm 0.12$\\
& 8 & $14.89 \pm 0.06$ & $14.82 \pm 0.07$\\
& 16 & $15.05 \pm 0.4$ & $14.15 \pm 0.35$\\
& 32 & $14.74 \pm 0.16$ & $12.97 \pm 0.11$\\
& 64 & $14.03 \pm 0.63$ & $10.52 \pm 0.31$\\
& 128 & $12.87 \pm 0.37$ & $7.78 \pm 0.16$\\ \hline
\end{tabular}
\end{center}
\end{table}

\section{Conclusions}\label{sec-conclusions}

Following the 802.11e standard, which opens the configuration of a number of contention parameters of the MAC layer, current WLAN cards allow the modification of these parameters by means of a simple command. One of the problems raised by this functionality offered by WLAN cards is that users can selfishly configure the parameters used by their station to increase their share of throughput at the expense of the other users.
%
In order to prevent this undesirable behavior, in this paper we design a novel adaptive algorithm called GAS (\emph{Game-theoretic Adaptive Stable}). With the GAS algorithm, upon detecting a selfish station users react by using a more aggressive configuration of the parameters that serves to punish the selfish station.

A critical aspect in the design of such an adaptive algorithm is to carefully adjust the reaction against a selfish station to avoid that the system turns unstable by overreacting. 
%
%
By conducting a \emph{Lyapunov stability analysis} of the GAS algorithm, we show that, when all the stations in the WLAN run GAS, the system is globally stable and converges to the desired configuration. Furthermore, by conducting a game theoretic analysis based on \emph{repeated games}, we show that a selfish station cannot benefit from playing a different strategy from GAS (neither with a fixed configuration nor a variable one).

We exhaustively evaluate the performance of GAS by means of simulations and show that: $(i)$ GAS is effective in optimizing throughput performance, $(ii)$ it is also effective against selfish stations using a fixed or variable configurations, and $(iii)$ it outperforms other approaches both in terms of protection against selfish configurations and robustness. Additionally, GAS is validated by means of an experimental prototype, confirming that it can be implemented on commodity hardware.

%

\bibliographystyle{IEEEtran}
\bibliography{./gametheory}

\section*{Appendix}

\subsection{Proof of Theorem \ref{lem:five}}\label{A}


We proceed by establishing two useful Lemmas, and then present the proof of Theorem  \ref{lem:five}.

\begin{lemma}\label{lem:three}
Consider the set of points $C(p_e)=\{\boldsymbol{\hat{\tau}}: \boldsymbol{\hat{\tau}}\in[\tau_m,\tau_M]^n, \prod_{j=1}^n(1-\hat{\tau}_j)=p_e\}$, $0\le \tau_m <\tau_M\le 1$. Over set $C(p_e)$, the vector $\boldsymbol{\hat{\tau}}$ minimising $\sum_{i=1}^n r_i(\boldsymbol{\hat{\tau}}) $ has all elements equal, i.e. $\hat{\tau}_i=\hat{\tau}_j$, $i,j\in\{1,2,\cdots,n\}$.
\end{lemma}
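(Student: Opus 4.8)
The plan is to reduce the minimisation to a separable problem on which convexity forces the symmetric point. The key observation is that, although $r_i$ in (\ref{eq-r}) depends on the full vector $\boldsymbol{\hat{\tau}}$, the coupling term $T_s(\boldsymbol{\hat{\tau}}) = T_t + (T_e-T_t)\prod_j(1-\hat{\tau}_j)$ depends on $\boldsymbol{\hat{\tau}}$ \emph{only} through the product $\prod_j(1-\hat{\tau}_j)$. On the level set $C(p_e)$ this product is pinned to $p_e$, so $T_s$ is a constant there. Using the second form of (\ref{eq-r}), I would write
\begin{equation}
\sum_{i=1}^n r_i(\boldsymbol{\hat{\tau}}) = \frac{l}{T_s(\boldsymbol{\hat{\tau}})}\,\Big(\prod_{j=1}^n(1-\hat{\tau}_j)\Big)\sum_{i=1}^n\frac{\hat{\tau}_i}{1-\hat{\tau}_i} = \frac{l\,p_e}{T_t+(T_e-T_t)p_e}\sum_{i=1}^n\frac{\hat{\tau}_i}{1-\hat{\tau}_i}, \nonumber
\end{equation}
so that, since the prefactor is a positive constant on $C(p_e)$, minimising $\sum_i r_i$ over $C(p_e)$ is equivalent to minimising $h(\boldsymbol{\hat{\tau}}):=\sum_{i=1}^n \hat{\tau}_i/(1-\hat{\tau}_i)$ subject to the product constraint $\prod_j(1-\hat{\tau}_j)=p_e$ and the box constraints $\hat{\tau}_i\in[\tau_m,\tau_M]$.

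Next I would change variables to turn the multiplicative constraint into an affine one. Setting $x_i=1-\hat{\tau}_i\in[1-\tau_M,\,1-\tau_m]$ gives $h=\sum_i(1/x_i)-n$ and the constraint $\prod_i x_i=p_e$, so it suffices to minimise $\sum_i 1/x_i$. Writing $y_i=\ln x_i$, the constraint becomes the affine equality $\sum_i y_i=\ln p_e$ and the objective becomes $\sum_i e^{-y_i}$. Since $y\mapsto e^{-y}$ is strictly convex, Jensen's inequality yields $\frac{1}{n}\sum_i e^{-y_i}\ge e^{-\bar{y}}$ with $\bar{y}=\frac1n\sum_i y_i=\frac{1}{n}\ln p_e$, and equality holds iff all $y_i$ are equal. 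Hence the minimiser over the affine-constrained set is the symmetric point $x_i=p_e^{1/n}$, i.e.\ $\hat{\tau}_i=1-p_e^{1/n}$ for all $i$.

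The main subtlety to discharge is the interaction of this symmetric point with the box constraints: Jensen minimises over the full affine constraint, and I must verify the minimiser actually lies in $[\tau_m,\tau_M]^n$. This follows from feasibility of $C(p_e)$ itself: for $C(p_e)$ to be nonempty the product $\prod_i x_i$ must be attainable within $[1-\tau_M,1-\tau_m]^n$, which forces $(1-\tau_M)^n\le p_e\le(1-\tau_m)^n$ and therefore $p_e^{1/n}\in[1-\tau_M,\,1-\tau_m]$, i.e.\ $1-p_e^{1/n}\in[\tau_m,\tau_M]$. Thus the symmetric point is feasible, and since it minimises $\sum_i 1/x_i$ over the \emph{larger} affine-constrained set while lying inside the box, it a fortiori minimises over $C(p_e)$. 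Strict convexity of $e^{-y}$ additionally gives that this symmetric minimiser is the unique one, completing the argument that the minimising vector has all components equal.
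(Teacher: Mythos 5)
Your proof is correct, and its core reduction is the one the paper uses: on the level set $C(p_e)$ the slot-time term $T_s$ is frozen, so minimising $\sum_i r_i$ collapses to minimising a separable sum subject to a product constraint, which a logarithmic change of variables turns into a strictly convex objective under an affine equality. Where you genuinely diverge is in how the box constraints $\hat{\tau}_i\in[\tau_m,\tau_M]$ are dispatched. The paper keeps them in the optimisation, forms the Lagrangian, and argues via the KKT conditions: when no multiplier is active all components are equal, and when some multiplier is active a separate Jensen-based contradiction shows \emph{all} components must sit at the same boundary value. You instead drop the box constraints entirely, identify the unconstrained (affine-only) minimiser as the symmetric point $\hat{\tau}_i=1-p_e^{1/n}$ by Jensen, and then observe that nonemptiness of $C(p_e)$ forces $(1-\tau_M)^n\le p_e\le(1-\tau_m)^n$, so this symmetric point is automatically box-feasible and hence a fortiori optimal over the smaller set. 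This is cleaner and slightly stronger: it gives the minimiser in closed form and avoids the case analysis on active constraints, at the cost of no additional hypotheses. The only caveat, shared with the paper's own proof, is the degenerate case $p_e=0$ (i.e.\ $\tau_M=1$ with some $\hat{\tau}_j=1$), where both changes of variables break down; there the sum of rates at the all-ones point is zero and the claim holds trivially, so it is worth a one-line remark rather than a fix.
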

\begin{proof}
By (\ref{eq-r}),
$
\sum_{i=1}^n r_i 
=\frac{l}{T_t}\frac{1}{\prod_{j=1}^n(1+x_j) +a}\sum_{i=1}^n x_i
$ 
where $x_i=\hat{\tau}_i/(1-\hat{\tau}_i)$ and $a=(T_e-T_t)/T_t$.  Minimising $\sum_{i=1}^n r_i $ over set $C(p_e)$ then corresponds to the following optimisation
\begin{align*}
\min_{x_i,i=1,2,\cdots,n}  \sum_{i=1}^n x_i \quad s.t. \prod_{j=1}^n(1+x_j)=\frac{1}{p_e}, x_m\le x_i\le x_M \forall i
\end{align*}
which we can rewritten equivalently as
\begin{align*}
\min_{z_i,i=1,2,\cdots,n}  \sum_{i=1}^n e^{z_i}-1 \ s.t. \sum_{j=1}^n z_i=\log\frac{1}{p_e}, z_m\le z_i\le z_M \forall i
\end{align*}
where $z_i = \log(1+x_i)$, $z_m=\log(1+x_m)$, $z_M=\log(1+x_M)$.   It is enough to show that any optimum $\mathbf{z}^*$ satisfies $z^*_i=z^*_j$, $i,j\in\{1,2,\cdots,n\}$.  

The objective is convex, the equality constraint is linear and the inequality constraints convex, hence this is a convex optimisation.  Since $\tau_m <\tau_M$ the Slater condition is satisfied and so strong duality holds.  The Lagrangrian $L$ is
\begin{minipage}{\columnwidth}
\footnotesize
\begin{align*}
\sum_{i=1}^n e^{z_i}-1 - \lambda(\sum_{j=1}^n z_i-\log\frac{1}{p_e}) 
+ \sum_{i=1}^n\underline{\theta}_i(z_m-z_i) 
 + \sum_{i=1}^n\bar{\theta}_i(z_i-z_M)
\end{align*}
\normalsize
\end{minipage}
and the main KKT conditions are
\begin{align}
\frac{\partial L}{\partial z_i}\bigg|_{z_i = z_i^*} &= e^{z^*_i} - \lambda - \underline{\theta}_i + \bar{\theta}_i = 0,\ i=1,2,\cdots,n
\end{align}
which must be satisfied by any optimal point $\mathbf{z}^*$.  When $\underline{\theta}_i  = 0 =\bar{\theta}_i$, $i=1,2,\cdots,n$ it follows from the KKT conditions that the minimum occurs when $z^*_i=z^*_j$, $i,j\in\{1,2,\cdots,n\}$.   

When $\underline{\theta}_i  > 0$ for some $i$ (and so by complementary slackness $z^*_i=z_m$), we would like to show that we must have $\underline{\theta}_i  > 0$ for all $i=1,2,\cdots,n$ (and so $z^*_i=z_m$ for all $i$).    Firstly, when $\log(1/p_e) = n z_m$, since $z^*_i\ge z_m$ it follows immediately that $z^*_i=z_m$ for all $i$.   Otherwise, we proceed by contradiction.  Suppose that $nz_m < \log(1/p_e) \le n z_M$ and $z^*_i=z_m$ for some $i$.   The elements of $\mathbf{z}^*$ are therefore not all the same value.  Consider the point $y_i =\log(1/p_e)/n$, $i=1,2,\cdots,n$.   This point satisfies the constraints
\begin{align}
\sum_{i=1}^n y_i = \log\frac{1}{p_e},\ z_m \le y_i \le  z_M
\end{align}
and so is feasible.  By the strict convexity of the exponential we have
$
ne^{\frac{1}{n}\sum_{i=1}^n z^*_i} = ne^{\log(1/pe)/n} < \sum_{i=1}^n e^{z^*_i} 
$ 
(with strict inequality since the elements of $\mathbf{z}^*$ are, by assumption, not all the same value).   Observing that $\sum_{i=1}^n e^{y_i} = ne^{\log(1/p_e)/n}$, it follows immediately that $\sum_{i=1}^n e^{y_i} < \sum_{i=1}^n e^{z^*_1}$ yielding a contradiction.  That is, we must either have $z^*_i=z_m$ for all $i$ or $z^*_i \ne z_m$ for all $i$. Since in the case we are analyzing we have $z^*_i=z_m$ for some $i$, this implies $z^*_i=z_m$ for all $i$. By an almost identical argument, it also follows that when $\underline{\theta}_i  < 0$ for some $i$, we must have $z^*_i=z_M$ for all $i$. 
\end{proof}

\begin{lemma}\label{lem:four}
Consider the set of points $C=\{\boldsymbol{\hat{\tau}}: \boldsymbol{\hat{\tau}}\in[\tau_m,\tau_M]^n\}$, $0\le \tau_m \le\tau_M\le 1$. Over set $C$, the vector $\boldsymbol{\hat{\tau}}$ minimising $\sum_{i=1}^n r_i(\boldsymbol{\hat{\tau}})$ satisfies either $\hat{\tau}_i=\tau_m$ for all $i$ or $\hat{\tau}_i=\tau_M$ for all $i$.
\end{lemma}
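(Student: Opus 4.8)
The plan is to reduce the $n$-dimensional box minimization to a one-dimensional problem along the symmetric diagonal by invoking Lemma \ref{lem:three}, and then to show that the resulting single-variable throughput function is unimodal, so that its minimum over an interval is forced to the endpoints. First I would note that the box $C$ is the union of the level sets $C(p_e)$ of Lemma \ref{lem:three} as $p_e$ ranges over $[(1-\tau_M)^n,(1-\tau_m)^n]$: every $\boldsymbol{\hat\tau}\in C$ has a product $p_e=\prod_{j=1}^n(1-\hat\tau_j)$ in that range, and conversely each such $C(p_e)$ is contained in $C$. (We may assume $\tau_m<\tau_M$, since the case $\tau_m=\tau_M$ is a single point for which the claim is trivial; strictness is also exactly what is needed to apply Lemma \ref{lem:three}.) By Lemma \ref{lem:three}, over each level set the minimizer of $\sum_i r_i$ has all components equal, with common value $\hat\tau=1-p_e^{1/n}\in[\tau_m,\tau_M]$. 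Hence minimizing $\sum_i r_i$ over the whole box is equivalent to minimizing the symmetric sum-rate $\phi(\tau):=n\,r(\tau)$ over $\tau\in[\tau_m,\tau_M]$, where $r(\tau)=\frac{l}{T_s(\tau)}\tau(1-\tau)^{n-1}$ and $T_s(\tau)=T_t+(T_e-T_t)(1-\tau)^n$.

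Next I would establish that $\phi$ is \emph{unimodal}: strictly increasing up to a unique interior maximum and strictly decreasing thereafter. The cleanest route is the substitution $x=\hat\tau/(1-\hat\tau)$ already used in Lemma \ref{lem:three}, under which $\phi$ is proportional to $h(x)=x/\big((1+x)^n+a\big)$ with $a=(T_e-T_t)/T_t<0$. A direct differentiation shows that the sign of $h'(x)$ equals the sign of $G(x)+a$ where $G(x)=(1+x)^{n-1}\big[1-(n-1)x\big]$, and one checks that $G(0)+a=T_e/T_t>0$ while $G'(x)=-n(n-1)x(1+x)^{n-2}<0$ for $x>0$; thus $G(x)+a$ is strictly decreasing and positive at the origin, so it changes sign exactly once. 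Since $x\mapsto\tau$ is a strictly increasing bijection, $\phi$ inherits unimodality, with its unique maximum at $\tau_{opt}$, consistent with the characterization of $\tau_{opt}$ given earlier in the section.

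Finally, a unimodal function on a closed interval attains its minimum at one of the two endpoints, so the symmetric minimizer is either $\tau=\tau_m$ or $\tau=\tau_M$, which yields the claimed all-$\tau_m$ or all-$\tau_M$ optimizer over $C$. I expect the main obstacle to be the unimodality step: verifying rigorously that the derivative has a single sign change requires the monotonicity computation for $G$ above rather than merely invoking the uniqueness of $\tau_{opt}$, because the subinterval $[\tau_m,\tau_M]$ may lie entirely on one side of $\tau_{opt}$, and the argument must rule out any interior local minimum on the whole domain. Everything else is the bookkeeping of gluing the per-level-set symmetric minimizers together.
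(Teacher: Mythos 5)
Your proposal is correct, and its first half is exactly the paper's argument: both reduce the box minimization to a one-dimensional problem over the symmetric diagonal by invoking Lemma~\ref{lem:three} (the paper states this reduction in one line; your gluing of the level sets $C(p_e)$ over $p_e\in[(1-\tau_M)^n,(1-\tau_m)^n]$ just makes it explicit). Where you diverge is in how you force the one-dimensional minimum to an endpoint. The paper substitutes $\tilde{x}=\log\frac{\hat\tau}{1-\hat\tau}$ and shows that $\log$ of the symmetric sum-rate is \emph{concave} in $\tilde{x}$ --- the key point being that $(1+e^{\tilde{x}})^n+a$ expands into a sum of exponentials with positive coefficients (since the constant term is $1+a=T_e/T_t>0$), so its logarithm is convex by the log-sum-exp argument --- and then concludes endpoint minimality from concavity. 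You instead prove \emph{unimodality} directly: the sign of $h'(x)$ for $h(x)=x/((1+x)^n+a)$ equals the sign of $G(x)+a$ with $G(x)=(1+x)^{n-1}(1-(n-1)x)$, and since $G(0)+a=T_e/T_t>0$ and $G'(x)=-n(n-1)x(1+x)^{n-2}<0$ for $x>0$ (with $n\ge 2$), the derivative changes sign exactly once. Both routes are valid; the paper's is slightly slicker but leans on a convexity claim it only sketches (and which requires noticing that $1+a>0$ even though $a<0$), whereas yours is more elementary and fully self-contained, at the cost of an explicit derivative computation. Unimodality is also a strictly weaker property than log-concavity yet still suffices for endpoint minimality, so your argument is marginally more robust. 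The only cosmetic caveat is that your strict inequality $G'(x)<0$ uses $n\ge 2$; for $n=1$ the function is monotone and the conclusion still holds trivially, so there is no actual gap.
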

\begin{proof}
If $\tau_m =\tau_M$ the result follows trivially.  Suppose therefore $\tau_m <\tau_M$.  By Lemma \ref{lem:three}, minimising $\sum_{i=1}^n r_i$  s.t. $\tau_m \le \hat{\tau}_i \le \tau_M$, $i=1,2,\cdots,n$ is equivalent to setting $\hat{\tau}_i=\tau^*$, $i=1,2,\cdots,n$ and finding a $\tau^*$ solving
\begin{equation}
\min_{\tau_m\le \hat{\tau} \le \tau_M} \frac{n\hat{\tau} }{1-\hat{\tau}} \frac{(1-\hat{\tau})^n}{T_t + (T_e-T_t)(1-\hat{\tau})^n}l
\end{equation}

Taking logs and letting $x=\frac{\hat{\tau}}{1-\hat{\tau}}$, $\tilde{x} = \log x$ this optimisation can be rewritten as
$
\min_{\tilde{x}_m \le \tilde{x} \le \tilde{x}_M} \tilde{r}(\tilde{x})
$
with $\tilde{r}(\tilde{x}) = \tilde{x} - \log( (1+e^{\tilde{x}})^n +a ) + \log \frac{nl}{T_t}$, $a=(T_e-T_t)/T_t$, $\tilde{x}_m=\log\frac{\tau_m}{1-\tau_m}$, $\tilde{x}_M=\log\frac{\tau_M}{1-\tau_M}$.  Importantly, the objective function $\tilde{r}(\cdot)$ is concave in $\tilde{x}$, since ($i$) the first term is linear; ($ii$) expanding the $(1+e^{\tilde{x}})^n$ term, it can be verified that the second term is convex \cite{Boyd}; and ($iii$) the third term is constant. Hence, for any $\tilde{x} = \alpha \tilde{x}_m + (1-\alpha)\tilde{x}_M$, $0 \le \alpha \le 1$ lying in the interval $[\tilde{x}_m, \tilde{x}_M]$ we have $\tilde{r}(\tilde{x})  \ge \alpha \tilde{r}(\tilde{x}_m) + (1-\alpha) \tilde{r}(\tilde{x}_M)$.  It follows immediately that the minimum of $\tilde{r}(\tilde{x})$ over interval $[\tilde{x}_m, \tilde{x}_M]$ must be located at one of the boundary points.
\end{proof}


\begin{proof}[Proof of Theorem \ref{lem:five}]
%
Let us denote $\tau_m = \tau_{opt} - \Delta$ and $\tau_M = \tau_{opt} + \Delta$. By Lemma \ref{lem:four}, $r_{opt} - \frac{1}{n}\sum_j r_j$ is maximized either when $\hat{\tau}_i = \tau_{m} \ \forall i$ or $\hat{\tau}_i = \tau_{M} \ \forall i$. For $\hat{\tau}_i = \tau_{m} \ \forall i$, we have
\begin{equation}
\frac{1}{n}\sum_j r_j = \frac{\tau_m(1-\tau_m)^{n-1}l}{T_{s,m}} \geq \frac{\tau_m(1-\tau_{opt})^{n-1}l}{T_{opt}}
\end{equation}
where $T_{s,m}$ and $T_{opt}$ are the values of $T_s$ when $\tau_i = \tau_m \ \forall i$ and $\tau_i = \tau_{opt} \ \forall i$, respectively. From the above,
\begin{align*}
r_{opt} - \frac{1}{n}\sum_j r_j &\leq (\tau_{opt}- \tau_m) \frac{(1-\tau_{opt})^{n-1}l}{T_{opt}}\\
& = \Delta \frac{r_{opt}}{\tau_{opt}} \leq \Delta \frac{r_{opt}}{\tau_{opt}(1-\tau_{opt})}
\end{align*}
from which we have that (\ref{eq-F_bound_3}) holds for this case.

We next address the case $\hat{\tau}_i = \tau_M \ \forall i$. If $\tau_M > 2\tau_{opt}$, it is easy to see that (\ref{eq-F_bound_3}) holds, as in this case $\Delta > \tau_{opt}$ and thus $\rho n \Delta > n r_{opt}$. 
To prove that (\ref{eq-F_bound_3}) also holds for $\tau_M \leq 2\tau_{opt}$, we proceed as follows. Let $r(\tau)$ be the throughput of a station as a function of $\tau$ when $\hat{\tau}_i = \tau$ for all $i$. Then,
\begin{equation}\label{eq-derivative}
r_{opt} - r(\tau_M) = \int_{\tau_M}^{\tau_{opt}}{\frac{d r(\tau)}{d \tau} d\tau}
\end{equation}
with $\frac{\partial r(\tau)}{\partial \tau} = \frac{l(1-\tau)^{n-2}(T_s - n \tau T_t)}{T_s^2}$.  
We next show that the above derivative is negative in the interval $\tau \in [\tau_{opt}, \tau_{M}]$. The sign of the derivative depends on that of the term $T_s - n \tau T_t$. Since the throughput is maximized at $\tau_{opt}$ and $n>1$, the derivative at $\tau = \tau_{opt}$ is 0 (when the number of stations $n>1$ the optimum attempt probability must lie in the interior of $[0,1]^n$), and so $T_s - n \tau T_t$=0. The derivative of $T_s - n \tau T_t$ is $n(1-\tau)^{n-1}(T_t-T_e) - n T_t$, which is negative for $\tau\in[0,1]$. Thus, $T_s - n \tau T_t$ equals 0 at $\tau = \tau_{opt}$ and decreases afterwards, which implies that $T_s - n \tau T_t<0$ for $\tau > \tau_{opt}$. With this, (\ref{eq-derivative}) can be rewritten as
\begin{equation}
r_{opt} - r(\tau_M) = - \int_{\tau_M}^{\tau_{opt}}{\bigg| \frac{d r(\tau)}{d \tau}\bigg| d\tau}
\end{equation}
which can be bounded as follows:
\small
\begin{align}
r_{opt} - r(\tau_M) & \leq - \int_{\tau_{M}}^{\tau_{opt}}{\bigg| \frac{dr(\tau)}{d \tau}\bigg|_{\max} d\tau}  = \bigg| \frac{dr(\tau)}{d \tau}\bigg|_{\max}(\tau_M - \tau_{opt})
\label{eq-bound-derivative}
\end{align}
\normalsize
where $|dr(\tau)/d\tau|_{\max}$ is an upper bound for the absolute value the derivative in the interval $\tau \in [\tau_{opt}, \tau_{M}]$


To find $|dr(\tau)/d\tau|_{\max}$, we proceed as follows. Given that $\tau \in \left(\tau_{opt}, \tau_M \right]$ and $\tau_M \leq \min(2\tau_{opt},1)$, we want to evaluate $dr(\tau)/d\tau$ at $\tau = K \tau_{opt}$ for $1 < K \leq \min(2,1/\tau_{opt})$, which yields
\begin{equation}
\frac{\partial r(\tau)}{\partial \tau} = \frac{l(1-K \tau_{opt})^{n-2}(T_K - n K \tau_{opt} T_t)}{T_K^2}
\end{equation}
where $T_K$ is the value of $T_s$ for $\tau_i = K \tau_{opt} \ \forall i$. Note that, for $K > 1$, we have $T_K > T_{opt}$ and $T_K - n K \tau_{opt} T_t < 0$ (the latter holds since we have earlier shown that the term $T_s - n \tau T_t$ is negative for $\tau > \tau_{opt}$). With this, the absolute value of $dr(\tau)/d\tau$ can be bounded by
\begin{equation}
\bigg| \frac{dr(\tau)}{d \tau}\bigg| \leq \frac{l(1 -\tau_{opt})^{n-2}(n K \tau_{opt} T_t - T_{opt})}{T_{opt}^2}
\end{equation}

Before, we have shown that the term $T_s - n \tau T_t$ is equal to 0 at $\tau = \tau_{opt}$, i.e., $T_{opt} - n\tau_{opt}T_t = 0$. Adding this term to $n K \tau_{opt} T_t - T_{opt}$ gives $(K-1)n \tau_{opt} T_t$. Furthermore, since $T_{opt} = n\tau_{opt}T_t$, this can be expressed as $(K-1)T_{opt}$. Combining this with the above equation yields:
\begin{align}
\bigg| \frac{dr(\tau)}{d \tau}\bigg| \leq & \ \frac{l(1-\tau_{opt})^{n-2}(K-1)}{T_{opt}} \leq \frac{l(1-\tau_{opt})^{n-2}}{T_{opt}} 
\end{align}
Finally, combining the above bound on the maximum value of the derivative with (\ref{eq-bound-derivative}) leads to:
\small
\begin{align*}
r_{opt} - \frac{1}{n}\sum_j r_j 
&  \leq \frac{l(1-\tau_{opt})^{n-2}}{T_{opt}} (\tau_M - \tau_{opt}) 
 = \frac{r_{opt}}{\tau_{opt}(1-\tau_{opt})}\Delta
\end{align*}
\normalsize
from which (\ref{eq-F_bound_3}) also holds for this case.
\end{proof}


\subsection{Proof of Theorem \ref{th:one}}\label{B}

Once again, we proceed by establishing a number of intermediate Lemmas, and then present the proof of Theorem  \ref{th:one}.

\begin{lemma}\label{lem:one}
\begin{align*}
(i)\ &\sum_{j \neq i} \left(r_j(\boldsymbol{\hat{\tau}}) - r_i(\boldsymbol{\hat{\tau}})\right)
\le 
\frac{\left(n - 1\right)l}{T_m}\left(\hat{\tau}_M - \hat{\tau}_i\right) \left(1-\frac{\tau_{opt}}{2}\right)^{n-2}\\
(ii)\ &\sum_{j \neq i} \left(r_j(\boldsymbol{\hat{\tau}}) - r_i(\boldsymbol{\hat{\tau}})\right)
\ge 
\frac{\left(n - 1\right)l}{T_m}\left(\hat{\tau}_m - \hat{\tau}_i\right) \left(1-\frac{\tau_{opt}}{2}\right)^{n-2}
\end{align*}
with $n\ge2$, $\hat{\tau}_M = \max_{i\in\{1,\cdots,n\}} \hat{\tau}_i$, $\hat{\tau}_m = \min_{i\in\{1,\cdots,n\}} \hat{\tau}_i$, $\frac{\tau_{opt}}{2} \le \hat{\tau}_k \le 1$.
\end{lemma}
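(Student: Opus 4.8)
The plan is to first collapse the summand $r_j(\boldsymbol{\hat{\tau}}) - r_i(\boldsymbol{\hat{\tau}})$ into a sign-transparent form. Writing $r_k(\boldsymbol{\hat{\tau}}) = \frac{l}{T_s(\boldsymbol{\hat{\tau}})}\hat{\tau}_k\prod_{p\neq k}(1-\hat{\tau}_p)$ from (\ref{eq-r}) and pulling out the factor $\prod_{k\neq i,j}(1-\hat{\tau}_k)$ common to both $r_j$ and $r_i$, the bracket that remains is $\hat{\tau}_j(1-\hat{\tau}_i) - \hat{\tau}_i(1-\hat{\tau}_j)$, whose cross terms $\hat{\tau}_i\hat{\tau}_j$ cancel. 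This yields the key identity
\[
r_j(\boldsymbol{\hat{\tau}}) - r_i(\boldsymbol{\hat{\tau}}) = \frac{l}{T_s(\boldsymbol{\hat{\tau}})}\,(\hat{\tau}_j - \hat{\tau}_i)\prod_{k\neq i,j}(1-\hat{\tau}_k),
\]
which replaces a difference of two throughputs by a single sign-definite factor $(\hat{\tau}_j - \hat{\tau}_i)$ multiplied by a manifestly nonnegative product. Establishing this cancellation cleanly is the conceptual heart of the lemma; everything afterwards is estimation.

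Summing over $j\neq i$ gives $\sum_{j\neq i}(r_j - r_i) = \frac{l}{T_s}\sum_{j\neq i}(\hat{\tau}_j-\hat{\tau}_i)\prod_{k\neq i,j}(1-\hat{\tau}_k)$, so it remains to bound an $(n-1)$-term sum. I would bound each term separately, splitting on the sign of $\hat{\tau}_j-\hat{\tau}_i$. For part $(i)$: on terms with $\hat{\tau}_j\ge\hat{\tau}_i$ the product factor is raised to its upper bound $(1-\tau_{opt}/2)^{n-2}$, which is legitimate because each of the $n-2$ factors satisfies $1-\hat{\tau}_k\le 1-\tau_{opt}/2$ under the hypothesis $\hat{\tau}_k\ge\tau_{opt}/2$, and then $\hat{\tau}_j-\hat{\tau}_i$ is enlarged to $\hat{\tau}_M-\hat{\tau}_i\ge0$; on terms with $\hat{\tau}_j<\hat{\tau}_i$ the summand is nonpositive while the target per-term bound $(\hat{\tau}_M-\hat{\tau}_i)(1-\tau_{opt}/2)^{n-2}$ is nonnegative, so the inequality holds trivially. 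Adding the $n-1$ per-term bounds produces the prefactor $(n-1)$. Part $(ii)$ is the mirror image: each term is lower-bounded by $(\hat{\tau}_m-\hat{\tau}_i)(1-\tau_{opt}/2)^{n-2}\le0$, again treating the two sign cases separately.

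The last step passes from $1/T_s$ to $1/T_m$, where $T_m$ denotes the minimum of $T_s(\boldsymbol{\hat{\tau}})$ over the box $\hat{\tau}_k\in[\tau_{opt}/2,1]$, so that $T_s\ge T_m$ and hence $1/T_s\le 1/T_m$. I expect the only real subtlety to be keeping the inequality direction correct, since the bracketed sum $S:=\sum_{j\neq i}(\hat{\tau}_j-\hat{\tau}_i)\prod_{k\neq i,j}(1-\hat{\tau}_k)$ may itself have either sign. For $(i)$, if $S\ge0$ then $\frac{l}{T_s}S\le\frac{l}{T_m}S\le\frac{l}{T_m}(n-1)(\hat{\tau}_M-\hat{\tau}_i)(1-\tau_{opt}/2)^{n-2}$, while if $S<0$ the left side is already negative and dominated by the nonnegative right side; $(ii)$ is handled symmetrically, using that scaling a negative quantity by the larger factor $1/T_m$ only decreases it. This sign bookkeeping, together with confirming that $T_m$ is a genuine lower bound for $T_s$ on the domain, is where the care is needed; the algebraic cancellation and the product estimate are otherwise routine.
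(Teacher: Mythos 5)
Your proof is correct and follows essentially the same route as the paper's: both rest on the cancellation identity $r_j-r_i=\frac{l}{T_s}(\hat{\tau}_j-\hat{\tau}_i)\prod_{k\neq i,j}(1-\hat{\tau}_k)$, the product bound $\prod_{k}(1-\hat{\tau}_k)\le(1-\tau_{opt}/2)^{n-2}$, and $T_s\ge T_m$. The only cosmetic difference is that the paper first replaces every $r_j$ by $r_M$ (resp.\ $r_m$) and applies the identity once to the pair $(M,i)$, whereas you apply the identity to each pair $(j,i)$ and bound term by term; your explicit sign bookkeeping for the $1/T_s\to 1/T_m$ step is careful and valid.
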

\begin{proof}
($i$) Since $r_i\le r_M$ we have $\sum_{j\neq i} \left(r_j - r_i\right) \le \sum_{j \neq i} \left(r_M- r_i\right) = (n-1)(r_M-r_i)$.  Substituting from (\ref{eq-r}) and rearranging we have
\begin{align*}
(n-1)(r_M-r_i)
=
\frac{\left(n - 1\right)l}{T_s}\left(\hat{\tau}_M - \hat{\tau}_i\right) \prod_{k \neq i, M} \! \left(1-\hat{\tau}_k\right)\\
\stackrel{(a)}{\le} \frac{\left(n - 1\right)l}{T_m}\left(\hat{\tau}_M - \hat{\tau}_i\right) \left(1-\frac{\tau_{opt}}{2}\right)^{n-2}
\end{align*}
where $(a)$ follows from the fact that $\frac{\tau_{opt}}{2} \le \hat{\tau}_k \le 1$ and $T_s\ge T_m:= T_t + (T_e - T_t)\left(1-\frac{\tau_{opt}}{2}\right)^n$ (the latter holds since $\hat{\tau}_j\ge \frac{\tau_{opt}}{2}$).

($ii$) Since $r_i\ge r_m$ we have $\sum_{j\neq i} \left(r_j - r_i\right) \ge  (n-1)(r_m-r_i)$.  The second part of the result now follows using an identical argument to ($i$).
\end{proof}

\begin{lemma}\label{lem:six}
(i) $\hat{\tau}_m - \hat{\tau}_i \ge \tau_m - \tau_i$ and (ii) $\hat{\tau}_M - \hat{\tau}_i \le \tau_M - \tau_i $, where  $\tau_m = \min_{i\in\{1,\cdots,n\}} \tau_i$, $\tau_M = \max_{i\in\{1,\cdots,n\}} \tau_i$, $\hat{\tau}_j = \max\{\frac{\tau_{opt}}{2}, \tau_j\}$.
\end{lemma}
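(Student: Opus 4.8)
The plan is to reduce both inequalities to a single monotonicity property of the clipping map. First I would observe that the lower-clipping function $h(x):=\max\{\tau_{opt}/2,\, x\}$ is nondecreasing, so it commutes with the $\min$ and $\max$ operators over the index set. This gives $\hat{\tau}_m=\min_i \hat{\tau}_i=\min_i h(\tau_i)=h(\tau_m)$ and, symmetrically, $\hat{\tau}_M=\max_i h(\tau_i)=h(\tau_M)$. In other words, the clipped extremes coincide with the clips of the raw extremes, which lets me express every clipped quantity in the lemma as $h$ applied to the corresponding raw quantity, in particular $\hat{\tau}_i=h(\tau_i)$.

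The key idea is then to study the scalar function $\phi(x):=h(x)-x=\max\{\tau_{opt}/2-x,\,0\}$, which is nonincreasing in $x$: it decreases linearly while $x<\tau_{opt}/2$ and is identically $0$ once $x\ge\tau_{opt}/2$. Both claims of the lemma are equivalent to a comparison of $\phi$ at two arguments. For part (i), I would rewrite the target inequality $\hat{\tau}_m-\hat{\tau}_i\ge\tau_m-\tau_i$ in the form $\phi(\tau_m)\ge\phi(\tau_i)$; since $\tau_m=\min_i\tau_i\le\tau_i$ and $\phi$ is nonincreasing, this holds. For part (ii), I would rewrite $\hat{\tau}_M-\hat{\tau}_i\le\tau_M-\tau_i$ as $\phi(\tau_M)\le\phi(\tau_i)$, which holds because $\tau_M=\max_i\tau_i\ge\tau_i$.

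There is no substantive obstacle here: the statement is elementary once the clipping is recognized as an order-preserving, nonexpansive map. The only step requiring a little care is the first one, namely interchanging $h$ with the $\min$ and $\max$ to identify $\hat{\tau}_m$ and $\hat{\tau}_M$ with $h(\tau_m)$ and $h(\tau_M)$; this is exactly where monotonicity of $h$ is invoked. After that, each part follows by a one-line rearrangement into a comparison of $\phi$ at the extreme argument versus at $\tau_i$, together with the sign of $\tau_m-\tau_i\le 0$ (respectively $\tau_M-\tau_i\ge 0$) and the monotonicity of $\phi$. I would note that the same argument goes through verbatim if the upper saturation at $1$ is included, since $\min\{1,\max\{\tau_{opt}/2,x\}\}-x$ is also nonincreasing, but for the stated lemma only the lower clip $h$ is needed.
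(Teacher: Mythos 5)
Your proof is correct. The identification $\hat{\tau}_m=h(\tau_m)$ and $\hat{\tau}_M=h(\tau_M)$ via monotonicity of the clip $h$ is sound (and reconciles the two possible readings of $\hat{\tau}_m$, as the clip of the min or the min of the clips), and the reduction of both parts to the monotonicity of $\phi(x)=h(x)-x=\max\{\tau_{opt}/2-x,0\}$ is exactly the right invariant: part (i) is $\phi(\tau_m)\ge\phi(\tau_i)$ with $\tau_m\le\tau_i$, part (ii) is $\phi(\tau_M)\le\phi(\tau_i)$ with $\tau_M\ge\tau_i$. The paper proves the same statement by a direct case split on whether $\tau_i\ge\tau_{opt}/2$ (with a further sub-split on $\tau_M$ in part (ii)), verifying each branch by hand; your argument packages all of those branches into the single observation that the clipping map is order-preserving and nonexpansive from below, which is shorter, symmetric between the two parts, and — as you note — extends verbatim to the two-sided clip $\min\{1,\max\{\tau_{opt}/2,x\}\}$ actually used in the algorithm. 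The paper's case analysis buys nothing extra here beyond being self-contained at the level of individual inequalities; your route is the more general and arguably cleaner one.
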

\begin{proof}
($i$) When $\tau_i \ge \frac{\tau_{opt}}{2}$ then $\hat{\tau}_i=\tau_i$.  Since $\hat{\tau}_m\ge \tau_m$ it follows that $\hat{\tau}_m - \hat{\tau}_i \ge \tau_m - \tau_i$.   When $\tau_i < \frac{\tau_{opt}}{2}$, then $\hat{\tau}_m = \hat{\tau}_i = \frac{\tau_{opt}}{2}$, and hence $\hat{\tau}_m - \hat{\tau}_i = 0$, while $\tau_m - \tau_i\le 0$.   
($ii$) When $\tau_i \ge \frac{\tau_{opt}}{2}$ then $\hat{\tau}_i=\tau_i$, $\hat{\tau}_M=\tau_M$ and $\hat{\tau}_M - \hat{\tau}_i = \tau_M - \tau_i $.  When $\tau_i < \frac{\tau_{opt}}{2}$ we have two cases:  (a) if $\tau_M<\frac{\tau_{opt}}{2}$ then $\hat{\tau}_M - \hat{\tau}_i = 0 \le \tau_M - \tau_i $;  (b) if $\tau_M\ge\frac{\tau_{opt}}{2}$ then $\hat{\tau}_M - \hat{\tau}_i = \tau_M-\hat{\tau}_i\le \tau_M-\tau_i$ since $\tau_i \le \hat{\tau}_i$.
\end{proof}

\begin{lemma}\label{lem:seven}
When $\gamma < \gamma_{max}= \left(\frac{nl}{T_m}\left(1-\frac{\tau_{opt}}{2}\right)^{n-2}\right)^{-1} $, $D(\boldsymbol{\hat{\tau}})<0$ and $n\ge 2$ then under update (\ref{eq-tau_updt}),
\begin{equation}
\tau_i\left(t+1\right) < \tau_M\left(t+1\right) \text{ if } \tau_i\left(t\right) < \tau_M\left(t\right)
\end{equation}
where $\tau_M = \max_{i\in\{1,\cdots,n\}} \tau_i$
\end{lemma}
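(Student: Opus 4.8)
The plan is to preserve the strict inequality by comparing station $i$ against a single well-chosen station rather than against the (possibly re-indexed) maximizer at stage $t+1$. Let $k$ be an index attaining $\tau_k(t)=\tau_M(t)$. Since by definition $\tau_M(t+1)=\max_j \tau_j(t+1)\ge \tau_k(t+1)$, it suffices to prove the stronger statement $\tau_i(t+1)<\tau_k(t+1)$ for every $i$ with $\tau_i(t)<\tau_M(t)$; the new maximum can only dominate the updated value of the fixed station $k$.

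The hypothesis $D(\boldsymbol{\hat{\tau}})<0$ enters precisely here: it forces the third branch of (\ref{eq-F}), so $F_i(\boldsymbol{\tau}(t))=F_k(\boldsymbol{\tau}(t))=D(\boldsymbol{\tau}(t))/(n-1)$ for \emph{all} stations simultaneously. Writing $R:=\sum_j r_j(\boldsymbol{\tau}(t))$ and using the identity $\sum_{j\neq l}\left(r_j-r_l\right)=R-n\,r_l$, the penalty terms cancel when I subtract the two instances of the update (\ref{eq-tau_updt}):
\begin{align*}
\tau_k(t+1)-\tau_i(t+1) &= \left(\tau_k(t)-\tau_i(t)\right)+\gamma\left(g_k-g_i\right)\\
&= \left(\tau_k(t)-\tau_i(t)\right)-\gamma n\left(r_k-r_i\right).
\end{align*}
Since $\tau_k(t)-\tau_i(t)=\tau_M(t)-\tau_i(t)>0$ by assumption, the whole claim reduces to showing that the subtracted term $\gamma n\left(r_k-r_i\right)$ is strictly smaller than this positive gap.

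To control $r_k-r_i$ I would use the first form of (\ref{eq-r}): factoring out the common product gives $r_k-r_i=\frac{l}{T_s}\left(\hat{\tau}_k-\hat{\tau}_i\right)\prod_{j\neq i,k}\left(1-\hat{\tau}_j\right)$, which is valid even at the upper clamp and is nonnegative because $\hat{\tau}$ is monotone in $\tau$, whence $\hat{\tau}_k=\hat{\tau}_M\ge\hat{\tau}_i$. Bounding $T_s\ge T_m$ and $\prod_{j\neq i,k}\left(1-\hat{\tau}_j\right)\le\left(1-\tau_{opt}/2\right)^{n-2}$ exactly as in the proof of Lemma~\ref{lem:one}, and then applying Lemma~\ref{lem:six}(ii) to replace $\hat{\tau}_M-\hat{\tau}_i$ by $\tau_M-\tau_i=\tau_k-\tau_i$, yields
\[
\gamma n\left(r_k-r_i\right)\le \gamma\,\frac{nl}{T_m}\left(1-\frac{\tau_{opt}}{2}\right)^{n-2}\left(\tau_k-\tau_i\right)=\frac{\gamma}{\gamma_{max}}\left(\tau_k-\tau_i\right).
\]
Because $\gamma<\gamma_{max}$ and $\tau_k-\tau_i>0$, the right-hand side is strictly below $\tau_k-\tau_i$, so $\tau_k(t+1)-\tau_i(t+1)>0$, giving $\tau_i(t+1)<\tau_k(t+1)\le\tau_M(t+1)$ as required.

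The step I expect to need the most care is the reduction to a single comparator: I must justify that testing $i$ only against the old maximizer $k$ is legitimate, which rests on the simple but essential observation that the stage-$(t+1)$ maximum dominates the updated value of any fixed station. The other delicate point is sign bookkeeping, namely confirming from the factored expression that $r_k\ge r_i$ (so that the term is genuinely subtracted, not added) and that Lemma~\ref{lem:six}(ii) is the part oriented to produce an \emph{upper} bound on the quantity being removed; applying the lower-bound variant (i) instead would break the estimate.
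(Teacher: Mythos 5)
Your proof is correct and takes essentially the same route as the paper's: compare the update of station $i$ against that of the stage-$t$ maximizer, use $D<0$ to cancel the (equal) $F$ terms, reduce to $\gamma n(r_M-r_i)<\tau_M-\tau_i$, and bound the throughput gap via the factored form of (\ref{eq-r}), $T_s\ge T_m$, $\hat{\tau}_j\ge\tau_{opt}/2$, and Lemma~\ref{lem:six}. Your explicit remark that the new maximum dominates the updated value of the old maximizer is a point the paper leaves implicit, but the argument is the same.
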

\begin{proof}
It is sufficient to show that
\small
\begin{align*}
\tau_i + \gamma \left( \sum_{j \neq i} \left(r_j - r_i\right) - F_i \right) < 
 \tau_M + \gamma \left( \sum_{j' \neq M} \left(r_j' - r_M\right) - F_M \right)
\end{align*}
\normalsize
where we drop the $t$ arguments from all quantities to streamline notation.  Since $F_i = F_M$, $i=1,\cdots,n$ when $D<0$ this simplifies to $\gamma n \left( r_M - r_i\right) < \tau_M - \tau_i$.  Substituting from (\ref{eq-r}) we obtain
\begin{equation}
\gamma \frac{nl}{T_s} \left(\hat{\tau}_M - \hat{\tau}_i\right) \prod_{j \neq i,M} \left( 1 - \hat{\tau}_j\right) < \tau_M  - \tau_i
\label{eq-cond_tauM_F4*}
\end{equation}
By Lemma \ref{lem:six}, $\hat{\tau}_M - \hat{\tau}_i \leq \tau_M - \tau_i$ and a sufficient condition for (\ref{eq-cond_tauM_F4*}) is
$\gamma \frac{nl}{T_s}\prod_{j \neq i,M} \left( 1 - \hat{\tau}_j\right) < 1$.  Since $\hat{\tau}_j \ge \frac{\tau_{opt}}{2}$, this  holds when $\gamma < \gamma_{max}$.
\end{proof}

\begin{lemma}\label{lem:eight}
Under the conditions of Lemma \ref{lem:seven}, $\tau_M \left(t+1\right) \le \tau_M\left(t\right)$, with equality only when $\tau_j=\tau_{opt}$, $j=1,\cdots,n$.
\end{lemma}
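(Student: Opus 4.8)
The plan is to reduce the statement to a single one-step inequality for a carefully chosen station and then to close that inequality using the geometry of the rate region. First I would invoke Lemma~\ref{lem:seven}: since every station strictly below the maximum at stage $t$ stays strictly below the maximum at stage $t+1$, the value $\tau_M(t+1)=\max_i\tau_i(t+1)$ must be attained at a station $M$ that already attained $\tau_M(t)=\max_i\tau_i(t)$. Hence it suffices to show $\gamma g_M(\boldsymbol{\tau}(t))<0$, i.e.\ $g_M(\boldsymbol{\tau}(t))<0$, for such an index $M$. Note that the ``equality'' alternative in the statement is vacuous here: under the hypothesis $D<0$ the configuration $\tau_j=\tau_{opt}\ \forall j$ (for which $D=0$) cannot occur, so I must in fact prove a \emph{strict} decrease.

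Next I would exploit the structure of $g_M$. Because at a fixed $\boldsymbol{\hat\tau}$ the throughput (\ref{eq-r}) is proportional across stations to $\hat\tau_i/(1-\hat\tau_i)$, the station with the largest (clamped) attempt probability has the largest throughput, so $r_M=\max_j r_j(\boldsymbol{\hat\tau})$; and since $D<0$ means $\sum_j r_j>n r_{opt}$, we get $r_M\ge \tfrac1n\sum_j r_j> r_{opt}$. Using $F_M=D/(n-1)$ (the third branch of (\ref{eq-F}), valid since $D<0$), the identity $\sum_{j\neq M}(r_j-r_M)=\sum_j r_j-n r_M$, and (\ref{eq-e}), a short rearrangement yields
\begin{equation}
g_M(\boldsymbol{\tau}) = \frac{n}{n-1}\left( \sum_j r_j - (n-1)r_M - r_{opt}\right). \nonumber
\end{equation}
Thus the goal reduces to the purely static inequality $\sum_j r_j < (n-1)r_M + r_{opt}$, equivalently $\sum_{j\neq M} r_j < (n-2)r_M + r_{opt}$.

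The crux, and the step I expect to be the main obstacle, is that the naive per-term bound $r_j\le r_M$ gives only $\sum_{j\neq M} r_j\le (n-1)r_M$, which is weaker than required by exactly one factor of $(r_M-r_{opt})$. To close this gap I would prove that \emph{at least one} station has throughput strictly below $r_{opt}$. This is where the rate-region geometry enters: the symmetric point $\hat\tau_j=\tau_{opt}$ satisfies $\sum_i T_{air,i}=1$ — indeed the defining relation for $\tau_{opt}$ is exactly $n\tau_{opt}T_t=T_{opt}$, the same identity used in the proof of Theorem~\ref{lem:five} — so by \cite{convexity} it lies on the boundary of the convex rate region and is therefore Pareto optimal. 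Consequently no achievable throughput vector can dominate it, i.e.\ cannot have all $r_j\ge r_{opt}$ with a strict inequality somewhere; since $r_M>r_{opt}$ strictly, there must exist a station $m\neq M$ with $r_m<r_{opt}$. Bounding the other $n-2$ terms by $r_M$ then gives $\sum_{j\neq M} r_j = r_m+\sum_{j\neq M,m} r_j < r_{opt}+(n-2)r_M$, so $g_M<0$ and $\tau_M(t+1)=\tau_M(t)+\gamma g_M<\tau_M(t)$, which establishes the claim (the $n=2$ case being the degenerate instance where the sum over the $n-2$ remaining stations is empty).
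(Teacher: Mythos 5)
Your proposal is correct in its overall structure and reaches the paper's key inequality by essentially the same reduction, but then closes it by a genuinely different argument. Like the paper, you (implicitly via Lemma~\ref{lem:seven}) localize the new maximum at a station $M$ attaining $\tau_M(t)$ and reduce the claim to the static inequality $\sum_j r_j < (n-1)r_M + r_{opt}$, which is exactly the paper's condition (\ref{eq-condition}). The paper proves this by monotonicity: it shows $\partial G/\partial \hat{\tau}_i > 0$ for $i \neq M$, so $G = \sum_j r_j - (n-1)r_M - r_{opt}$ is maximized when $\hat{\tau}_i = \hat{\tau}_M$ for all $i$, where $G = r_M - r_{opt} \le 0$ by optimality of $\tau_{opt}$ among symmetric points. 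You instead combine $r_M > r_{opt}$ (from $D<0$ and $r_M = \max_j r_j$) with the existence of some station $m$ having $r_m < r_{opt}$, obtained from undominatedness of the symmetric point $(r_{opt},\dots,r_{opt})$. Your route is shorter and more conceptual, and your observation that the equality clause is vacuous under $D<0$ is a nice clarification the paper does not make. The one step you must tighten is the Pareto-optimality claim: the 802.11 rate region is log-convex rather than convex, and lying on the topological boundary of a convex set does not by itself imply Pareto optimality, so the phrase ``boundary of the convex rate region and therefore Pareto optimal'' is not a valid inference as written. It can be repaired either by appealing directly to \cite[Lemma 1]{convexity} as the paper quotes it (the rate-region boundary, i.e.\ the Pareto frontier, is exactly the set with $\sum_i T_{air,i}=1$, and the identity $n\tau_{opt}T_t = T_{opt}$ places the symmetric point on it), or by reproving undominatedness directly, e.g.\ via the paper's own monotonicity computation --- in which case your argument partially collapses back into the paper's. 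You should also dispose explicitly of the degenerate case $\hat{\tau}_M = 1$, where the proportionality of $r_i$ to $\hat{\tau}_i/(1-\hat{\tau}_i)$ breaks down; there the conclusion is immediate since all other stations receive zero throughput, as the paper notes.
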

\begin{proof}
It is sufficient to show that
\begin{equation}
\tau_M + \gamma \left( \displaystyle \sum_{i} r_i - n r_M - \frac{nr_{opt} - \sum_i r_i}{n-1} \right) \le \tau_M
\end{equation}
with equality only when $\tau_j=\tau_{opt}$, $j=1,\cdots,n$.    When $\tau_j=\tau_{opt}$, equality holds.  Assume now that  $\tau_j\ne\tau_{opt}$ for some $j$.   Since $\gamma > 0$, the above condition is satisfied when
\begin{equation}
r_M + \displaystyle \sum_{i \neq M} \left( r_i - r_M \right) - r_{opt} < 0
\label{eq-condition}
\end{equation}
If $\hat{\tau}_M = 1$, then (\ref{eq-condition}) is satisfied since $n>1$ and $r_i = 0$ for $i \neq M$.  Suppose therefore $\hat{\tau}_M < 1$ and define function $G = r_M + \sum_{i \neq M} \left( r_i - r_M \right) - r_{opt}$.  The partial derivative of $G$ with respect to $\hat{\tau}_i$ is given by 
$
\frac{\partial G}{\partial\hat{\tau}_i} = \frac{\partial r_i}{\partial \hat{\tau}_i} + \sum_{j \neq i, M}{\frac{\partial (r_j - r_M)}{\partial \hat{\tau}_i}}
$.
It can be verified that $\partial r_i/\partial \hat{\tau}_i > 0$ (since $\hat{\tau}_j\le\hat{\tau}_M < 1$). Also,
\begin{align*}
\frac{\partial (r_j - r_M)}{\partial \hat{\tau}_i} = & - \frac{l}{T_s^2}\left(\frac{\hat{\tau}_j}{1- \hat{\tau}_j} - \frac{\hat{\tau}_M}{1- \hat{\tau}_M}\right) \nonumber\\
&\times\bigg(\prod_{k \neq i}{(1 - \hat{\tau}_k)} T_s 
+ \prod_{k}{(1 - \hat{\tau}_k)\frac{\partial T_s}{\partial \hat{\tau}_i}}\bigg)
\ge 0
\end{align*}
since $\partial T_s/\partial \hat{\tau}_i > 0$ and $\frac{\tau}{1-\tau}$ is monotonically increasing in $\tau$.  Hence
 $\partial G/\partial \hat{\tau}_i > 0$, which implies that $G$ takes a maximum for the largest possible value of $\hat{\tau}_i$, for all $i \neq M$. Since $\hat{\tau}_i \leq \hat{\tau}_M$, this means that $G$ is maximized when $\hat{\tau}_i = \hat{\tau}_M$ for all $i$. In this case, (\ref{eq-condition}) becomes $r_M - r_{opt} < 0$.   Since $r_{opt}$ is the maximum throughput when all stations use the same transmission attempt probability, $r_M - r_{opt} = 0$ only if $\hat{\tau}_M=\tau_{opt}$. But by assumption $\tau_M \ne \tau_{opt}$ and so we must have $r_M-r_{opt}<0$.  
\end{proof}

\begin{proof}[Proof of Theorem \ref{th:one}]
To establish global asymptotically stability we show that $\|\boldsymbol\tau\left(t+1\right) - \boldsymbol\tau_{opt}\|_\infty < \|\boldsymbol\tau\left(t\right) - \boldsymbol\tau_{opt}\|_\infty$ unless $\boldsymbol\tau\left(t\right) = \boldsymbol\tau_{opt}$. By definition, $\|\boldsymbol\tau \left(t\right) - \boldsymbol\tau_{opt}\|_\infty=\max\left(\left| \tau_M \left(t\right) - \tau_{opt}\right|, \left| \tau_m\left(t\right) - \tau_{opt}\right|\right)$, where $\tau_M$ and $\tau_m$ are the maximum and minimum values of the elements of vector $\boldsymbol\tau$ respectively.   We proceed in a case-by-case fashion.

\textbf{Case 1}: $\tau_M\left(t\right) > \tau_{opt}$, $\|\boldsymbol\tau \left(t\right) - \boldsymbol\tau_{opt}\|_\infty = \tau_M\left(t\right)-\tau_{opt}$. For $\|\boldsymbol\tau\left(t+1\right) - \boldsymbol\tau_{opt}\|_\infty < \|\boldsymbol\tau\left(t\right) - \boldsymbol\tau_{opt}\|_\infty$ we require
\begin{equation}
\left|\tau_i \left(t+1\right) - \tau_{opt}\right| < \tau_M\left(t\right) - \tau_{opt}, \qquad  i=1,\cdots,n
\label{eq-cond_tauM}
\end{equation}
Substituting from (\ref{eq-tau_updt}) and (\ref{eq-ei}), (\ref{eq-cond_tauM}) is satisfied provided:
\begin{align}
\tau_i + \gamma \left( \displaystyle \sum_{j \neq i} \left(r_j - r_i\right) - F_i \right) - \tau_{opt} &< \tau_M - \tau_{opt}
\label{eq-C1}\\
\tau_i + \gamma \left( \displaystyle \sum_{j \neq i} \left(r_j - r_i\right) - F_i \right) - \tau_{opt} &> \tau_{opt} - \tau_M
\label{eq-C2}
\end{align}
where the dependency on $t$ has been omitted to simplify notation. 

\textbf{Case 1a ($F_i = -D/n$: $\tau_i \leq \tau_{opt}$, $nr_{opt} \geq \sum_j r_j$)}.
Using Lemma \ref{lem:one} plus Theorem \ref{lem:five} with $\Delta=\tau_M-\tau_{opt}$, (\ref{eq-C1}) is satisfied provided
\footnotesize
\begin{align}
&\gamma  \left(\frac{\left(n - 1\right)l}{T_m}\left(\hat{\tau}_M - \hat{\tau}_i\right)  \left(1-\frac{\tau_{opt}}{2}\right)^{n-2} 
+\rho(\tau_M-\tau_{opt})\right)  < \tau_M - \tau_i\label{eq:DL0}
\end{align}
\normalsize
By Lemma \ref{lem:six}, $\hat{\tau}_M -\hat{\tau}_i \leq \tau_M - \tau_i$. Also, by assumption $\tau_i\le\tau_{opt}$ and so $\tau_M-\tau_i \geq \tau_M-\tau_{opt}$. It then follows that (\ref{eq:DL0}) (and so (\ref{eq-C1})) is satisfied provided
\begin{equation}
\gamma < \gamma_{max} < \left(\frac{\left(n - 1\right)l}{T_m}\left(1-\frac{\tau_{opt}}{2}\right)^{n-2} + \rho\right)^{-1}
\label{eq-gamma0}
\end{equation}

Since $-F_i =\frac{D}{n}\ge 0$, (\ref{eq-C2}) is satisfied provided
\begin{align}\label{eq:DL1}
\gamma \displaystyle \sum_{j \neq i} \left(r_j - r_i\right)  > 2\tau_{opt} - \tau_M - \tau_i
\end{align}
By assumption $\left|\tau_M-\tau_{opt}\right| \geq \left|\tau_m-\tau_{opt}\right|$ and so $\tau_m \geq 2\tau_{opt} - \tau_M$.  Also, by assumption $\tau_i\le \tau_{opt} <\tau_M$ and so $\tau_m < \tau_M$, $r_m<r_M$.  If $\tau_i =2\tau_{opt}-\tau_M$ then $r_i=r_m$ and (\ref{eq:DL1}) holds provided $\gamma>0$ (the RHS equals 0 while the LHS is lower bounded by $\gamma(r_M-r_m)>0$).  Otherwise, suppose $\tau_i > 2\tau_{opt}-\tau_M$.  By Lemma \ref{lem:one}, (\ref{eq:DL1}) is satisfied provided 
\begin{align*}
\gamma\frac{\left(n - 1\right)l}{T_m}\left(\hat{\tau}_m - \hat{\tau}_i\right) \left(1-\frac{\tau_{opt}}{2}\right)^{n-2}
>
2\tau_{opt} - \tau_M - \tau_i
\end{align*}
By Lemma \ref{lem:six}, $\hat{\tau}_m - \hat{\tau}_i \ge \tau_m - \tau_i$.   And  $\tau_m-\tau_i \ge 2\tau_{opt} - \tau_M - \tau_i$.  Hence, (\ref{eq:DL1}) is satisfied provided 
\begin{equation}
\gamma  < \frac{n}{n-1}\gamma_{max}=\left(\frac{\left(n - 1\right)l}{T_m}\left(1-\frac{\tau_{opt}}{2}\right)^{n-2} \right)^{-1}
\label{eq-gamma2}
\end{equation}


\textbf{Case 1b ($F_i = \frac{D}{n}$: $\tau_i > \tau_{opt}$, $nr_{opt} \geq \sum_j r_j$)}.
From $nr_{opt} \geq \sum_j r_j$, it holds that $F_i = \frac{D}{n}=r_{opt} - \frac{1}{n}\sum_{j}r_j \geq 0$. If $\tau_i = \tau_M$ then either ($i$) $\tau_j = \tau_i$ for all $j$, in which case $F_i>0$ and $\sum_{j \neq i} \left(r_j - r_i\right) = 0$, or ($ii$) $\tau_j < \tau_i$ for some $j$, in which case $\sum_{j \neq i} \left(r_j - r_i\right) < 0$ and (as mentioned above) $F_i \geq 0$. In both cases, (\ref{eq-C1}) is satisfied. 
Otherwise, we have $\tau_i < \tau_M$.  Since $F_i \ge 0$, (\ref{eq-C1}) is satisfied provided $\gamma  \displaystyle \sum_{j \neq i} \left(r_j - r_i\right)  < \tau_M -\tau_i$.  By Lemma \ref{lem:one}, this holds provided
\begin{align}\label{eq:DL2}
\gamma \frac{\left(n - 1\right)l}{T_m}\left(\hat{\tau}_M - \hat{\tau}_i\right) \left(1-\frac{\tau_{opt}}{2}\right)^{n-2}
<
\tau_M -\tau_i
\end{align}
By assumption, $\tau_i > \tau_{opt}$ and so $\hat{\tau}_i=\tau_i$, $\hat{\tau}_M=\tau_M$.   Also, $\tau_M -\tau_i>0$.  Hence, (\ref{eq:DL2}) holds when $\gamma$ satisfies (\ref{eq-gamma2}).

Using Lemma \ref{lem:one} plus Theorem \ref{lem:five} with $\Delta=\tau_M-\tau_{opt}$, (\ref{eq-C2}) is satisfied provided
\footnotesize
\begin{align}
&\gamma \left( \frac{\left(n - 1\right)l}{T_m}\left(\hat{\tau}_i - \hat{\tau}_m\right)  \left(1-\frac{\tau_{opt}}{2}\right)^{n-2} 
+ \rho(\tau_M-\tau_{opt})\right) \nonumber\\
&\qquad < \tau_i - (2\tau_{opt} - \tau_M)
\label{eq-cond_taum_F3'}
\end{align}
\normalsize
By assumption, $\tau_m \geq 2\tau_{opt} - \tau_M$ and so by Lemma \ref{lem:six}, $\hat{\tau}_i - \hat{\tau}_m \le \tau_i - (2\tau_{opt} - \tau_M)$.   Also, by assumption $\tau_i > \tau_{opt}$ and so $\tau_i - (2\tau_{opt} - \tau_M) \ge \tau_M- \tau_{opt}$.  It then follows that 
(\ref{eq-cond_taum_F3'}) (and so (\ref{eq-C2})) is satisfied when $\gamma$ satisfies (\ref{eq-gamma0}).


\textbf{Case 1c ($F_i = D/\left(n-1\right)$: $nr_{opt} < \sum_j r_j$)}.
By Lemmas \ref{lem:seven} and \ref{lem:eight}, $\tau_i\left(t+1\right) <\tau_M \left(t+1\right) < \tau_M\left(t\right)$ and so (\ref{eq-C1}) is satisfied (observe that the LHS of (\ref{eq-C1}) is $\tau_i\left(t+1\right)-\tau_{opt}$).  Since $F_i\le 0$, (\ref{eq-C2}) is satisfied provided $\gamma \sum_{j \neq i} \left(r_j - r_i\right)   > 2\tau_{opt} - \tau_M -\tau_i$.  By Lemma \ref{lem:one}, this holds provided 
\begin{align*}
\gamma \frac{\left(n - 1\right)l}{T_m}\left(\hat{\tau}_i - \hat{\tau}_m\right) \left(1-\frac{\tau_{opt}}{2}\right)^{n-2} < \tau_i -( 2\tau_{opt} - \tau_M)
\end{align*}
By assumption, $\tau_m \geq 2\tau_{opt} - \tau_M$ and by Lemma \ref{lem:six}, $\hat{\tau}_i - \hat{\tau}_m \le \tau_i - (2\tau_{opt} - \tau_M)$.  Hence, the above holds (and so (\ref{eq-C2}) is satisfied) when $\gamma$ satisfies (\ref{eq-gamma2}). 


\textbf{Case 2}: $\tau_M\left(t\right) < \tau_{opt}$ or $\|\boldsymbol\tau \left(t\right) - \boldsymbol\tau_{opt}\|_\infty \neq \tau_M\left(t\right)-\tau_{opt}$.  In this case, it necessarily holds that $\tau_m \left(t\right) < \tau_{opt}$ and $\|\boldsymbol\tau \left(t\right) - \boldsymbol\tau_{opt}\|_\infty = \tau_{opt} - \tau_m\left(t\right)$.    For $\|\boldsymbol\tau\left(t+1\right) - \boldsymbol\tau_{opt}\|_\infty < \|\boldsymbol\tau\left(t\right) - \boldsymbol\tau_{opt}\|_\infty$ we require
\begin{align}
\tau_i + \gamma \left( \displaystyle \sum_{j \neq i} \left(r_j - r_i\right) - F_i \right) - \tau_{opt} &< \tau_{opt} -\tau_m
\label{eq-D1}\\
\tau_i + \gamma \left( \displaystyle \sum_{j \neq i} \left(r_j - r_i\right) - F_i \right) - \tau_{opt} &> \tau_m -\tau_{opt}
\label{eq-D2}
\end{align}


\textbf{Case 2a ($F_i = -D/n$: $\tau_i \leq \tau_{opt}$, $nr_{opt} \geq \sum_j r_j$)}.
By Lemma \ref{lem:one} and Theorem \ref{lem:five} with $\Delta=\tau_{opt}-\tau_m$ condition (\ref{eq-D1}) is satisfied provided
\footnotesize
\begin{align}\label{eq:DL21}
&\gamma  \left(\frac{\left(n - 1\right)l}{T_m}\left(\hat{\tau}_M - \hat{\tau}_i\right)  \left(1-\frac{\tau_{opt}}{2}\right)^{n-2} 
+ \rho(\tau_{opt}-\tau_{m})\right) \nonumber \\
&\qquad<
2\tau_{opt} - \tau_m - \tau_i 
\end{align}
\normalsize
Since $\tau_i \leq \tau_{opt}$ then $2\tau_{opt} - \tau_m - \tau_i \ge \tau_{opt}-\tau_m$.  By Lemma \ref{lem:six}, $\hat{\tau}_M - \hat{\tau}_i \le \tau_M - \tau_i $.   When $\|\boldsymbol\tau \left(t\right) - \boldsymbol\tau_{opt}\|_\infty \neq \tau_M\left(t\right)-\tau_{opt}$, then $\left|\tau_M-\tau_{opt}\right| < \left|\tau_m-\tau_{opt}\right|$ and so $\tau_m < 2\tau_{opt} - \tau_M$ i.e. $\tau_M < 2\tau_{opt} - \tau_m$.  Hence,  $\tau_M - \tau_i < 2\tau_{opt} -\tau_m -\tau_i$.   When $\tau_M < \tau_{opt}$ then $\tau_M - \tau_i \le \tau_{opt} - \tau_i = 2\tau_{opt} -\tau_m -\tau_i -(\tau_{opt}-\tau_m)\le 2\tau_{opt} -\tau_m -\tau_i$ where the last inequality follows from the fact that $\tau_m\le \tau_i\le \tau_{opt}$.   It follows that (\ref{eq:DL21}) holds when $\gamma$ satisfies  (\ref{eq-gamma0}).


If $\tau_i=\tau_m$ then  (\ref{eq-D2}) is satisfied since $r_j-r_m\ge 0$ and $F_i>0$ (unless $\tau_i=\tau_{opt}$ $\forall i$). Otherwise, if $\tau_i>\tau_m$ then since $F_i\le 0$, (\ref{eq-D2}) is satisfied provided $\gamma \sum_{j \neq i} \left(r_j - r_i\right) >\tau_m-\tau_i$.     By Lemmas \ref{lem:one} and \ref{lem:six}, this holds when $\gamma$ satisfies (\ref{eq-gamma2}).


\textbf{Case 2b ($F_i = \frac{D}{n}$: $\tau_i > \tau_{opt}$, $nr_{opt} \geq \sum_j r_j$)}.
Note that $\tau_M \ge \tau_i >\tau_{opt}$.  Hence, to be in case 2 we must have $\left|\tau_M-\tau_{opt}\right| < \left|\tau_m-\tau_{opt}\right|$ and so $\tau_m < 2\tau_{opt} - \tau_M$ i.e. $\tau_M < 2\tau_{opt} - \tau_m$. If $\tau_i=\tau_M$ then (\ref{eq-D1}) is satisfied since the LHS non-negative while the RHS is positive.  Otherwise, if $\tau_i<\tau_M$ then since $F_i\ge 0$, (\ref{eq-D1}) is satisfied provided $\gamma \sum_{j \neq i} \left(r_j - r_i\right) < 2\tau_{opt} -\tau_m -\tau_i$.  By Lemma \ref{lem:one}  this holds when
\begin{align}\label{eq:DL_20}
\frac{\left(n - 1\right)l}{T_m}\left(\hat{\tau}_M - \hat{\tau}_i\right) \left(1-\frac{\tau_{opt}}{2}\right)^{n-2}
< 
2\tau_{opt} -\tau_m -\tau_i
\end{align}
As already noted, $\tau_M < 2\tau_{opt} - \tau_m$ and $\tau_M\ge \tau_i > \tau_{opt}$.  Hence,  $\hat{\tau}_M - \hat{\tau}_i = \tau_M - \tau_i < 2\tau_{opt} -\tau_m -\tau_i$.  It follows that  (\ref{eq:DL_20}) is satisfied when $\gamma$ satisfies (\ref{eq-gamma2}).   


Using Lemma \ref{lem:one} and Theorem \ref{lem:five} with $\Delta=\tau_{opt}-\tau_m$, condition (\ref{eq-D2}) is satisfied provided
\small
\begin{align*}
&\gamma  \frac{\left(n - 1\right)l}{T_m}\left(\hat{\tau}_m - \hat{\tau}_i\right)  \left(1-\frac{\tau_{opt}}{2}\right)^{n-2} 
 +\gamma\rho(\tau_m-\tau_{opt})  > \tau_m - \tau_i
\end{align*}
\normalsize
Since $\tau_i>\tau_{opt}$, $\hat{\tau}_m - \hat{\tau}_i\ge \tau_m-\tau_i$.  Also, $\tau_m-\tau_{opt}\ge \tau_m-\tau_i$.  It follows that the above holds when $\gamma$ satisfies (\ref{eq-gamma0}).


\textbf{Case 2c ($F_i = D/\left(n-1\right)$: $nr_{opt} < \sum_j r_j$)}.
Observe that the LHS of (\ref{eq-D1}) is $\tau_i\left(t+1\right)-\tau_{opt}$. By Lemmas \ref{lem:seven} and \ref{lem:eight}, $\tau_i\left(t+1\right) <\tau_M \left(t+1\right) < \tau_M\left(t\right)$. By assumption, $\tau_M\left(t\right) - \tau_{opt} < \tau_{opt} - \tau_m$. Therefore, (\ref{eq-D1}) is satisfied.

If $\tau_i=\tau_m$ then (\ref{eq-D2}) is satisfied.  Otherwise, if $\tau_i>\tau_m$ then since $F_i\le 0$ condition (\ref{eq-D2}) is satisfied provided $\gamma \sum_{j \neq i} \left(r_j - r_i\right) > \tau_m -\tau_i$.  By  Lemma \ref{lem:one} this holds provided $\gamma \frac{\left(n - 1\right)l}{T_m}\left(\hat{\tau}_i - \hat{\tau}_m\right) \left(1-\frac{\tau_{opt}}{2}\right)^{n-2} < \tau_i -\tau_m$.  By Lemma \ref{lem:six} this holds when $\gamma$ satisfies (\ref{eq-gamma2}). 
\end{proof}
\color{black}

\subsection{Proofs of Theorems \ref{th:two} and \ref{th:three} }
\begin{proof}[Proof of Theorem \ref{th:two}]
The GAS algorithm computes $\tau_i$ at a given stage $t'$ according to the following expression:
\begin{equation}
\tau_i(t') = \tau_i^{initial} + \gamma \sum_{t = 0}^{t'}{\left(\sum_{j \neq i}{\left(r_j(t)-r_i(t)\right)-F_i(t)}\right)}
\label{eq-tauit'}
\end{equation}

If $\tau_i$ exceeds 1 at any stage, then it decreases in the next stages until it goes below 1. Indeed, for $\tau_i > 1$ we have $\hat{\tau}_i = 1$, which leads to $r_j = 0$ for $j\neq i$ and $F_i > -r_i$, and thus from the above expression $\tau_i$ decreases. This implies that $\tau_i$ can never exceed $1+\delta$, where $\delta$ is the maximum distance that $\tau_i$ can cover in one stage. From (\ref{eq-ei}), we have that $\delta \leq \gamma\max(C+r_{opt},n(C-r_{opt})/(n-1))$, where $C$ is the maximum total throughput of the WLAN. Therefore, $\tau_i$ never exceeds $\tau_{max} = 1 + \gamma\max(C+r_{opt},n(C-r_{opt})/(n-1))$. Taking this into account, (\ref{eq-tauit'}) yields
\begin{equation}\label{eq-ineq1}
\sum_{t}{\left(\sum_{j \neq i}{\left(r_j(t)-r_i(t)\right)-F_i(t)}\right)} \leq \frac{\tau_{max}-\tau_i^{initial}}{\gamma}
\end{equation}

Let us consider the case in which there is a selfish station that changes its configuration over time and receives a throughput $r_s(t)$ while the rest of the stations are well-behaved, using the same configuration and obtaining the same throughput $r(t)$. Then the above can be expressed as
\begin{equation}\label{eq-ineq}
\sum_{t}{r_s(t)} \leq \sum_{t}{\left(r(t) + F_i(t)\right)} + \frac{\tau_{max}-\tau_i^{initial}}{\gamma}
\end{equation}

If we now consider the throughput of the selfish user over an interval $T$, the average throughput over this interval can be computed as:
\begin{equation}
r_s = \frac{1}{T} \sum_{t}{r_s(t) T_{beacon}}
\end{equation}

From (\ref{eq-ineq}),
\begin{equation}
r_s \leq \frac{1}{T} \sum_{t}{\left(r(t) + F_i(t)\right)T_{beacon}} + \left(\frac{\tau_{max}-\tau_i^{initial}}{\gamma}\right)\frac{T_{beacon}}{T}
\end{equation}

Since we considering a very large interval $T \to \infty$, the term $\left(\frac{\tau_{max}-\tau_i^{initial}}{\gamma}\right)\frac{T_{beacon}}{T}$ tends to 0, which yields
\begin{equation}\label{eq-ss2}
r_s \leq \frac{1}{T} \sum_{t}{\left(r(t) + F_i(t)\right)T_{beacon}}
\end{equation}

Let us consider now a given stage $t$. From (\ref{eq-F}) we have
\begin{equation}
F_i(t) \leq \frac{1}{n-1}\left(nr_{opt}-r_s(t)-(n-1)r(t)\right)
\end{equation}
which yields
\begin{equation}
(n-1)r(t) + r_s(t) + (n-1)F_i(t) \leq nr_{opt}
\end{equation}

Since the above equation is satisfied for all $t$,
\begin{equation}
\sum_t{(n-1)r(t) + r_s(t) + (n-1)F_i(t)} \leq \sum_t{nr_{opt}}
\end{equation}

Furthermore, from (\ref{eq-ss2}),
\begin{equation}
(n-1)\sum_{t}{r_s(t)} \leq (n-1)\sum_{t}{(r(t) + F_i(t))}
\end{equation}

Adding the above two equations yields
\begin{equation}
n \sum_{t}{r_s(t)} \leq n\sum_{t}{r_{opt}}
\end{equation}
from which
\begin{equation}
r_s = \frac{1}{T} \sum_{t}{r_s(t)T_{beacon}} \leq \frac{1}{T} \sum_{t}{r_{opt}T_{beacon}} = r_{opt}
\end{equation}
which proves the theorem. Since the right hand side of the above equation is precisely the throughput that the selfish station would get if it always played GAS, this shows that the selfish station cannot benefit from using a different strategy no matter how it changes its configuration over time. As the proof does not make any assumption on the configuration of the selfish station, this holds for any configuration of all the 802.11e parameters.
\end{proof}


\begin{proof}[Proof of Theorem \ref{th:three}]
The proof of Theorem \ref{th:two} is independent of the past history, and therefore it can be applied to any subgame. This means that \emph{All-GAS} is a Nash equilibrium of any subgame.
\end{proof}

\vfill

\end{document}